\newcommand{\A}{\mathrm{A}}
\newcommand{\B}{\mathrm{B}}
\newcommand{\C}{\mathrm{C}}
\newcommand{\D}{\mathrm{D}}
\newcommand{\vs}{\mathbf{s}}
\newcommand{\vS}{\mathbf{S}}
\newcommand{\coal}{\mathrm{coal}}
\newcommand{\CRW}{\mathrm{CRW}}
\DeclareMathOperator{\Prob}{\mathbb{P}}
\DeclareMathOperator{\E}{\mathbb{E}}
\newtheorem{theorem}{Theorem}
\newtheorem{lemma}{Lemma}
\newtheorem{corollary}{Corollary}
\begin{document}

\title{Evolutionary dynamics on any population structure}

\author[1,2,3]{Benjamin Allen}
\author[4]{Gabor Lippner}
\author[5]{Yu-Ting Chen}
\author[1,6]{Babak Fotouhi}
\author[1,7]{Naghmeh Momeni}
\author[3,8]{Shing-Tung Yau}
\author[1,8,9]{Martin A.~Nowak}

 \affil[1]{Program for Evolutionary Dynamics, Harvard University, Cambridge, MA, USA}
 \affil[2] {Department of Mathematics, Emmanuel College, Boston, MA, USA}
 \affil[3] {Center for Mathematical Sciences and Applications, Harvard University, Cambridge, MA, USA}
 \affil[4] {Department of Mathematics, Northeastern University, Boston, MA, USA}
 \affil[5] {Department of Mathematics, University of Tennessee, Knoxville, TN, USA}
 \affil[6] {Institute for Quantitative Social Sciences, Harvard University, Cambridge, MA, USA}
 \affil[7] {Department of Electrical and Computer Engineering, McGill University, Montreal, Canada}
 \affil[8] {Department of Mathematics, Harvard University, Cambridge, MA, USA}
 \affil[9] {Department of Organismic and Evolutionary Biology, Harvard University, Cambridge, MA, USA}

\maketitle

\begin{abstract}
Evolution occurs in populations of reproducing individuals. The structure of a biological population affects which traits evolve \cite{ErezGraphs,NowakStructured}.  Understanding evolutionary game dynamics in structured populations is difficult. Precise results have been absent for a long time, but have recently emerged for special structures where all individuals have the same number of neighbors \cite{Ohtsuki,Taylor,chen2013sharp,allen2014games,debarre2014social}. But the problem of determining which trait is favored by selection in the natural case where the number of neighbors can vary, has remained open.  For arbitrary selection intensity, the problem is in a computational complexity class which suggests there is no efficient algorithm \cite{ibsen2015computational}. Whether there exists a simple solution for weak selection was unanswered. Here we provide, surprisingly, a general formula for weak selection that applies to any graph or social network. Our method uses coalescent theory \cite{kingman1982coalescent,WakeleyCoalescent} and relies on calculating the meeting times of random walks \cite{cox1989coalescing}. We can now evaluate large numbers of diverse and heterogeneous population structures for their propensity to favor cooperation.  We can also study how small changes in population structure---graph surgery---affect evolutionary outcomes. We find that cooperation flourishes most in societies that are based on strong pairwise ties.
\end{abstract}

Population structure affects ecological and evolutionary dynamics \cite{DurrettLevin,HassellCoexistence,NowakMay,NowakStructured}.  Evolutionary graph theory \cite{ErezGraphs,Ohtsuki,allen2014games} provides a mathematical tool for representing population structure: vertices correspond to individuals and edges indicate interactions.  Graphs can describe spatially structured populations of bacteria, plants or animals \cite{allen2013spatial}, tissue architecture and differentiation in multi-cellular organisms \cite{NowakLinear}, or social networks \cite{santos2008social,rand2014static}.  Individuals reproduce into neighboring vertices according to their fitness.  The graph topology affects the rate of genetic change \cite{allen2015molecular} and the balance of drift versus selection \cite{ErezGraphs}.  The well-mixed population, which is a classical scenario for mathematical studies of evolution, is given by the complete graph.

Of particular interest is the evolution of social behavior, which can be studied using evolutionary game theory \cite{MaynardSmith2,Hofbauer1998,broom2013game}. Evolutionary game dynamics, which are tied to ecological dynamics \cite{Hofbauer1998},
arise whenever reproductive rates are affected by interactions with others.

\begin{figure}
\begin{center}
\includegraphics[width=\textwidth]{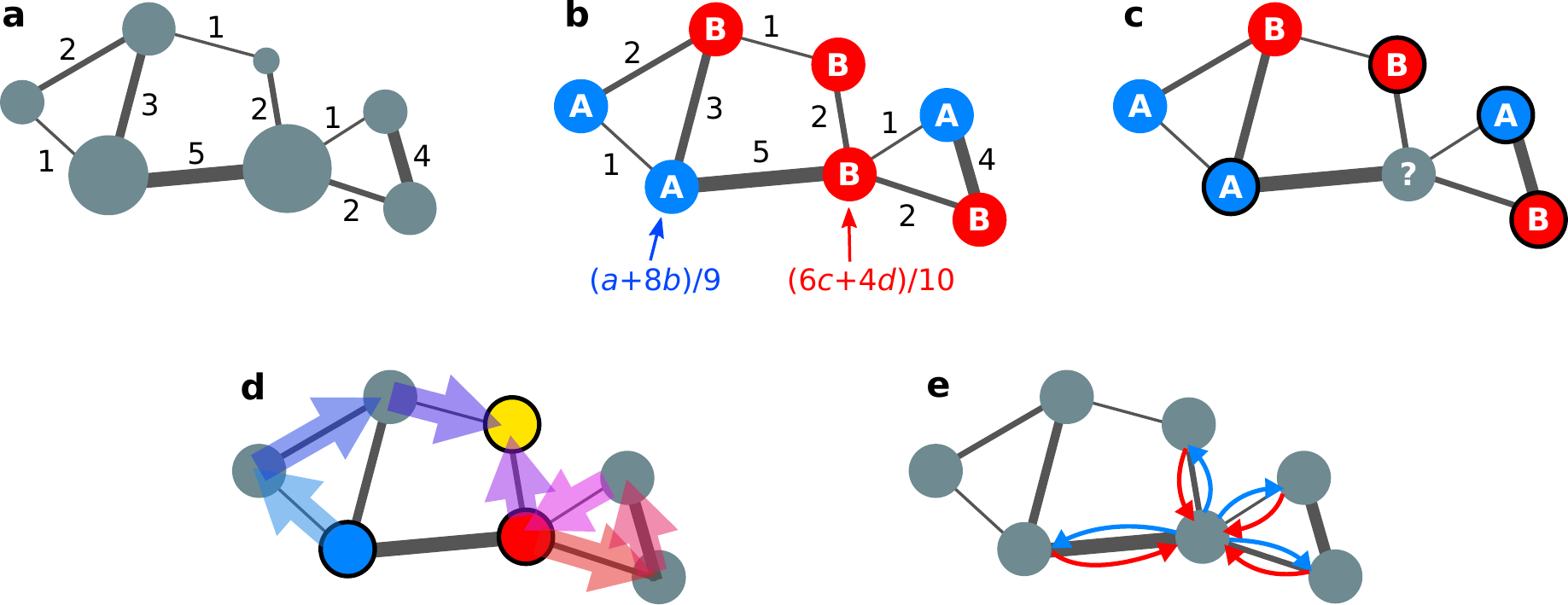}
\caption{\textbf{Evolutionary games on weighted heterogeneous graphs. a,} Population structure is represented by a  graph with edge weights $w_{ij}$.  Here vertices are sized proportionally to their weighted degree $w_i = \sum_j w_{ij}$.
\textbf{b,} Each individual $i$ interacts with each neighbor and retains the edge-weighted average payoff $f_i$, shown here for the payoff matrix \eqref{eq:game}. Payoff is translated into reproductive rate $F_i = 1 + \delta f_i$, where $\delta$ represents the strength of selection. \textbf{c,} For Death-Birth updating, first a random individual $i$ is selected to be replaced; then a neighbor $j$ is chosen with probability proportional to $w_{ij} F_{j}$ to reproduce into the vacancy.  This update rule induces competition between two-step neighbors (black circles).  \textbf{d,} Coalescent theory \cite{kingman1982coalescent,cox1989coalescing,WakeleyCoalescent,liggett2006interacting} traces ancestries back in time as random walks.  The coalescence time $\tau_{ij}$ is the expected meeting time of random walks from $i$ and $j$. The meeting point (yellow circle) represents the location of the most recent common ancestor. \textbf{e,} A key quantity is the probability $p_i$ that a randomly chosen neighbor of $i$ will chose $i$ in return, with both choices made proportionally to edge weight. Cooperation thrives when these probabilities are large.}
\label{fig:games}
\end{center}
\end{figure}  

In evolutionary games on graphs \cite{SantosScaleFree,Ohtsuki,Taylor,chen2013sharp,allen2014games,maciejewski2014evolutionary,debarre2014social}, individuals interact with neighbors according to a game and reproduce based on payoff (Fig.~\ref{fig:games}).  A central question is to determine which strategies succeed on a given graph.  It has been shown \cite{ibsen2015computational} that there cannot be a general closed-form solution or polynomial-time algorithm for this question, unless P=NP.  To make analytical progress, one can
suppose that selection is weak, meaning that the game has only a small effect on reproductive success.  In this case, exact results are known for regular graphs, where each individual has the same number of neighbors \cite{Ohtsuki,Taylor,chen2013sharp,allen2014games}. Evolutionary games on heterogenous (non-regular) graphs, which are ubiquitous in nature \cite{strogatz2001exploring}, have only been investigated using computer simulations \cite{SantosScaleFree,maciejewski2014evolutionary}, heuristic approximations \cite{konno2011condition}, and special cases \cite{Corina,hadjichrysanthou2011evolutionary,maciejewski2014evolutionary}.

Here we obtain exact results for all weighted graphs (Fig.~\ref{fig:games}a). The edge weights, $w_{ij}$, determine the frequency of game interaction and the probability of replacement between vertices $i$ and $j$. Individuals are of two types, A and B. The game is specified by a payoff matrix
\begin{equation}
\label{eq:gamemain}
\bordermatrix{ & \A & \B\cr \A & a & b\cr \B & c & d}.
\end{equation}
At each time step, each individual interacts with all of its neighbors.  The reproductive rate of individual $i$ is given by $F_i = 1+\delta f_i$, where $f_i$ is the edge-weighted average of the payoff that $i$ receives from its neighbors (Fig.~\ref{fig:games}b). The parameter $\delta$ represents the strength of selection. Weak selection is the regime $0<\delta \ll 1$.  Neutral drift, $\delta = 0$, serves as a baseline.  

As update rule, we first consider Death-Birth \cite{Ohtsuki} (Fig.~\ref{fig:games}c). An individual is chosen uniformly at random to be replaced;  then a neighbor is chosen proportionally to reproductive rate to reproduce into the vacancy.  Offspring inherit the type of their parent. Death-Birth updating is a natural scenario for genetic evolution and also translates into social settings: a random individual resolves to update its strategy; subsequently it adopts one of its neighbors' strategies proportionally to their payoff. 

Over time, the population will reach the state of all $A$ or all $B$.  Suppose we introduce a single $A$ individual at a vertex chosen uniformly at random in a population consisting of $B$ individuals.  The fixation probability, $\rho_A$, is the probability of reaching all $A$ from this initial condition. Likewise, $\rho_B$ is the probability of reaching all $B$ when starting with a single $B$ individual in a population otherwise of $A$. Selection favors $A$ over $B$ if $\rho_A>\rho_B$.

The outcome of selection depends on the spatial assortment of types, which can be studied using coalescent theory \cite{kingman1982coalescent,WakeleyCoalescent}.  Ancestral lineages are represented as random walks \cite{cox1989coalescing}.  A step from $i$ to $j$ occurs with probability $p_{ij}=w_{ij}/w_i$, where $w_i = \sum_k w_{ik}$ is the weighted degree of vertex $i$.  The coalescence time $\tau_{ij}$ is the expected meeting time of two independent random walks starting at vertices $i$ and $j$ (Fig.~\ref{fig:games}d).  Coalescence times can be obtained exactly and efficiently as the solution of the system of linear equations
\begin{equation}
\label{eq:taurecur}
\tau_{ij} = \begin{cases} 1+ \frac{1}{2} \sum_k ( p_{ik} \tau_{kj} + p_{jk} \tau_{ik} ) &  i \neq j\\
0 & i=j. \end{cases}
\end{equation} 
We show in Appendix \ref{sec:timedifferent} that the coalescence time $\tau_{ij}$ equals the expected total time during which individuals $i$ and $j$ have different types.  Therefore, if $T$ is the time to absorption (fixation or extinction of the invading type), then $T-\tau_{ij}$ is the time during which $i$ and $j$ have the same type.  Of particular interest is the expected coalescence time $t_n$ from the two ends of an $n$-step random walk, with the initial vertex chosen proportionally to weighted degree.

Our main result holds for any payoff matrix, but we first study a donation game.  Cooperators pay a cost, $c$, and provide a benefit, $b$. Defectors pay no cost and provide no benefit.  This leads to the payoff matrix
\begin{equation}
\label{eq:donation}
\bordermatrix{
& \C & \D\cr
\C & b-c & -c\cr
\D & b & 0 },
\end{equation}
For $b>c>0$, this game is a Prisoners' Dilemma.  
We find that cooperation is favored over defection, $\rho_\C>1/N>\rho_\D$, for weak selection, if and only if
\begin{equation}
\label{eq:bcT}
-c(T-t_0)+b(T-t_1)>-c(T-t_2)+b(T-t_3).
\end{equation}

Intuitively, condition \eqref{eq:bcT} states that a cooperator must have a higher average payoff than a random individual two steps away.  These two-step neighbors compete with the cooperator for opportunities to reproduce (Fig 1b). The first term, $-c(T-t_0)$, is the cost for being a cooperator, which is paid for the entire time, $T$, because $t_0=0$. The second term, $b(T-t_1)$, is the average benefit that the cooperator gets from its one-step neighbors.  For an expected time of $T-t_1$, a one-step neighbor is also a cooperator. The remaining terms, $-c(T-t_2)+b(T-t_3)$, describe the average payoff of an individual two steps away. That individual pays cost $c$ whenever it is a cooperator (time $T-t_2$) and receives benefit $b$ whenever its one-step neighbors---which are three-step neighbors of the first cooperator---are cooperators (time $T-t_3$).  

Time $T$ cancels in \eqref{eq:bcT}, leaving $-ct_2 + b(t_3-t_1)>0$.  Therefore, if $t_3>t_1$ for a given graph, cooperation is favored whenever the benefit-to-cost ratio exceeds 
\begin{equation}
\label{eq:critbc}
\left(\frac{b}{c} \right)^*=\frac{t_2}{t_3-t_1}.
\end{equation}
The critical threshold $(b/c)^*$  can be obtained exactly and efficiently for any graph by solving a system of linear equations for coalescence times and substituting into equation \eqref{eq:critbc}.  Although equation \eqref{eq:critbc} is exact only for weak selection, Monte Carlo simulations (Fig.~\ref{fig:simulation}) show that it is highly accurate for fitness costs of up to 2.5\%.  

\begin{figure}
\includegraphics[width=\textwidth]{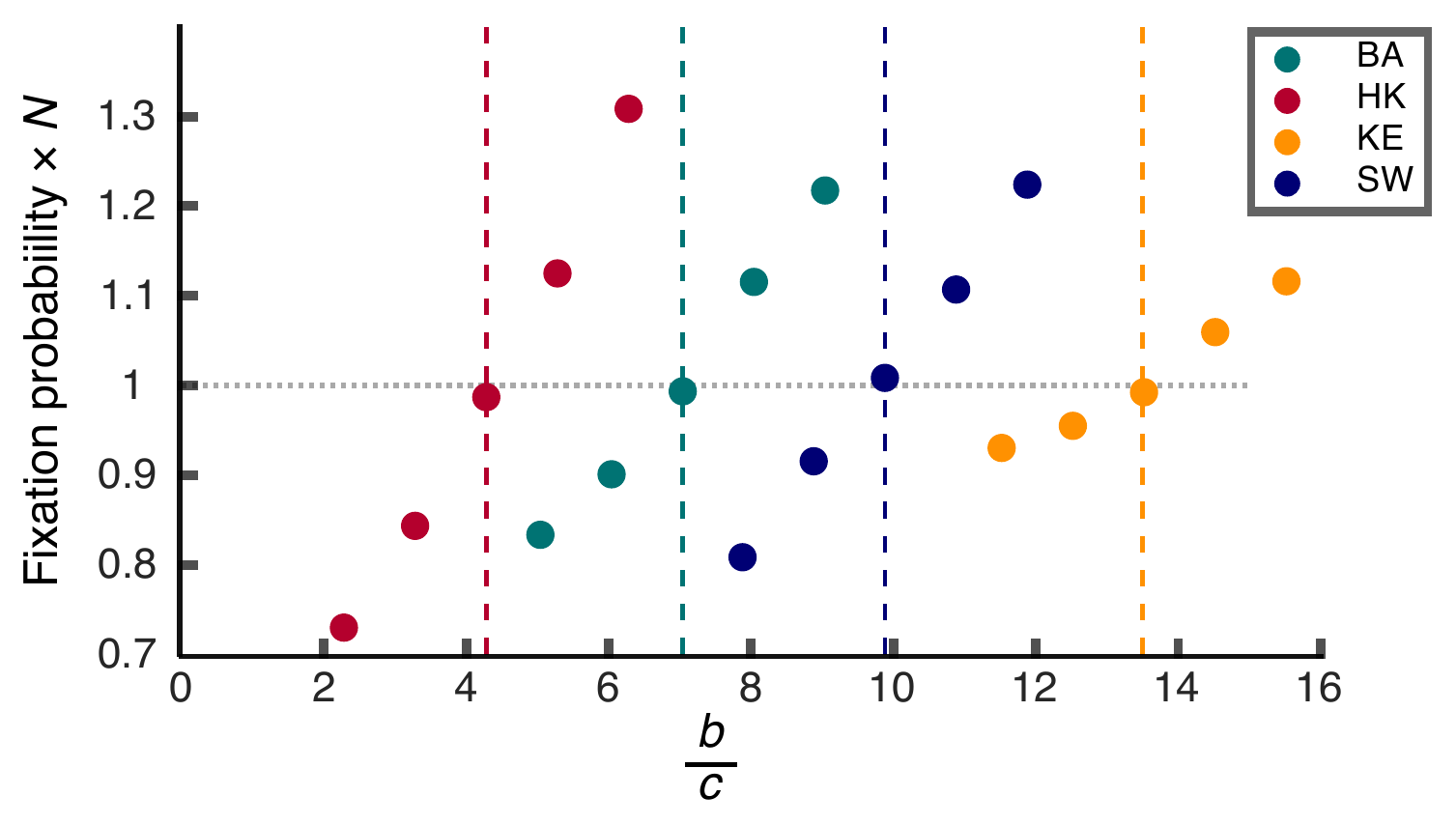}
\caption{\textbf{Simulations show accuracy of our results for moderate selection.} Our results are exact for weak selection.  To assess accuracy for nonweak selection, we performed Monte Carlo simulations with selection strength $\delta = 0.025$ and cost $c=1$.  This corresponds to a fitness cost of 2.5\%, which was empirically determined to be the cost of a cooperative behavior in yeast \cite{GoreSnowdrift}. Markers indicate population size times frequency of fixation for a particular value of $b$ on a particular graph.  Dashed lines indicate $(b/c)^*$ as calculated from Eq.~\eqref{eq:critbc}.  All graphs have size $N=100$.  Graphs are: Barabasi-Albert \cite{barabasi} (BA) with linking number $m=3$, small world \cite{newman} (SW) with initial connection distance $d=3$ and edge creation probability $p=0.025$, Klemm-Eguiluz \cite{KE} (KE) with linking number $m=5$ and deactivation parameter $\mu = 0.2$, and Holme-Kim \cite{HK} (HK) with linking number $m=2$ and triad formation parameter $P=0.2$.}
\label{fig:simulation}
\end{figure}

\begin{figure}
\includegraphics[width=\textwidth]{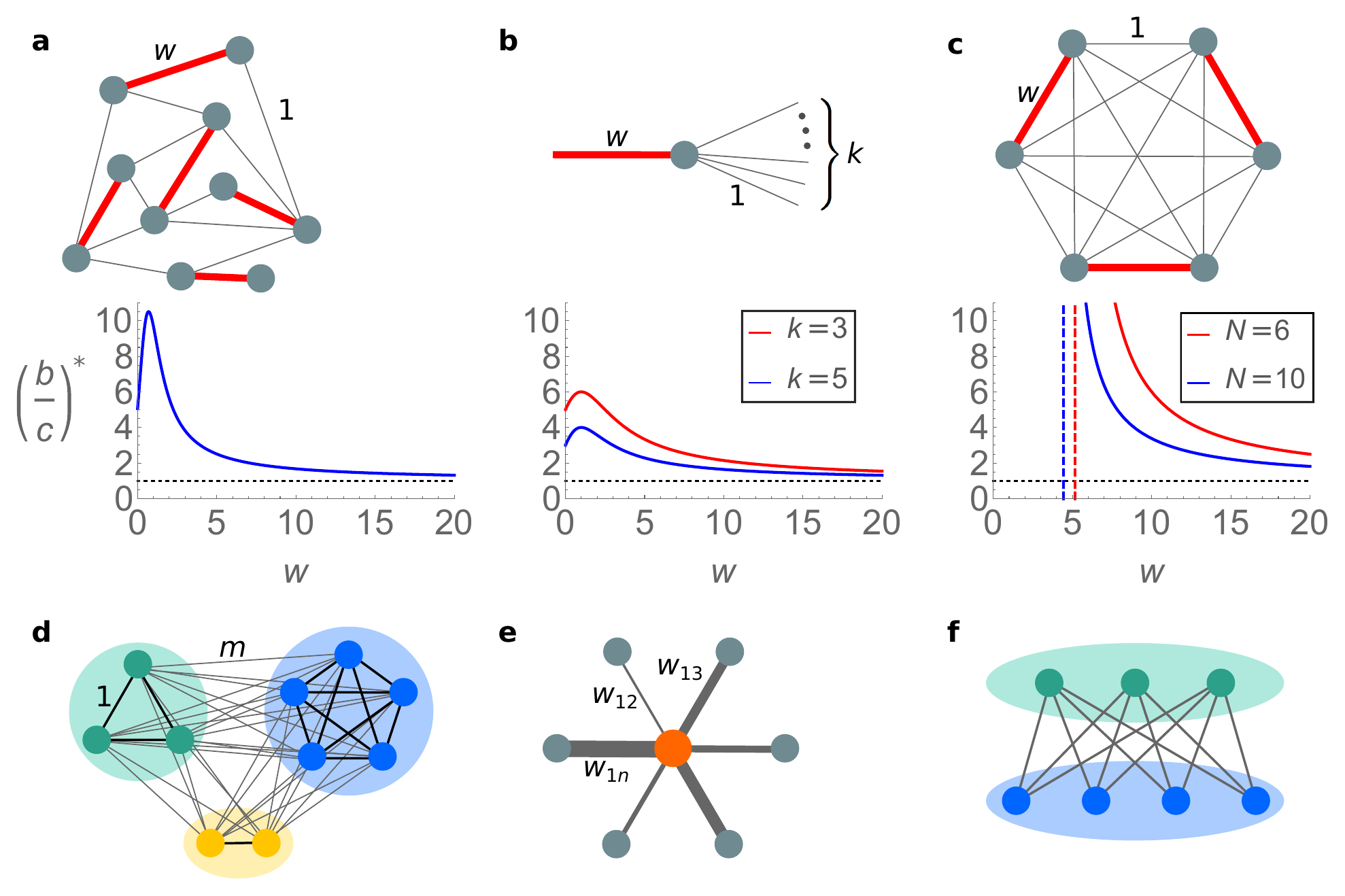}
\caption{\textbf{Graphs that promote or hinder cooperation.}  For any graph in which each individual has one partner with edge weight $w$ and all other edges have weight 1, the critical benefit-cost ratio converges to 1 as $w \to \infty$.  Thus any cooperative behavior can be favored for sufficiently large $w$. Examples include \textbf{a,} a disordered network; \textbf{b,} a weighted regular graph in which each individual has one interaction of weight $w$ and $k$ of weight 1; here $(b/c)^*=(w+k)^2/(w^2+k)$ for $N \gg k$; and \textbf{c,} a complete graph in which each individual has one partner with edge weight $w$ and all other weights are 1; here $(b/c)^*=(w-2+N)^2/[(w-2)^2-N]$;  cooperation is only possible if $w > \sqrt{N} + 2$ (dashed vertical lines). \textbf{d,} A population divided into heterogeneous islands.  Edge weights are 1 between vertices on the same island and $m<1$ between vertices on different islands.  Cooperation is most favored when the islands have equal size and migration is rare. \textbf{e,} The weighted star graph and \textbf{f,} the complete bipartite graph do not support cooperation. For both graphs, all walks of odd length are equivalent; thus $t_3=t_1$ and cooperation cannot be favored according to Eq.~\eqref{eq:critbc}.  }
\label{fig:promote}
\end{figure}

A positive value of the critical benefit-to-cost ratio means that cooperation can be favored if it is sufficiently effective. Positive values of $(b/c)^*$ always exceed---but can be arbitrarily close to---unity, at which point any cooperation that produces a net benefit is favored (Fig.~\ref{fig:promote}a--c).  A negative value, which arises for $t_3<t_1$, means that cooperation cannot be favored, but spiteful behaviors, $b<0$, $c>0$, are favored if $b/c < (b/c)^*$.  If $t_3=t_1$, then $(b/c)^*$ is infinite, and neither cooperation nor spite are favored.  

    Which networks best facilitate evolution of cooperation?  We find that cooperation thrives when there are strong pairwise ties between individuals (Fig.~\ref{fig:promote}a--c).  To quantify this property, let $p_i = \sum_j p_{ij} p_{ji}$ be the probability that a random walk from vertex $i$ returns to $i$ on its second step (Fig.~\ref{fig:games}e). In other words, $p_i$ is the probability that, if $i$ choses a neighbor (for an interaction), that choice is returned by the neighbor. Cooperation succeeds best if the $p_i$ are large.  For  graphs satisfying a particular locality property (see Appendix \ref{sec:pbar}), the critical benefit-to-cost ratio becomes $({b / c})^*={1 / \bar p}$, where $\bar{p}$ is a weighted average of the $p_i$. For unweighted regular graphs of degree $k \ll N$, a two-step walk has probability $1/k$ to return, yielding the condition \cite{Ohtsuki} $b/c>k$.

As an application, consider a population divided into islands of arbitrary sizes (Fig.~\ref{fig:promote}d).  Pairs on the same island are joined by edges of weight 1. Pairs on different islands are joined by edges of weight $m<1$. Our analytical results suggest that for a fixed number of islands, cooperation is most favored when the islands have equal size and $m$ is small (Appendix \ref{sec:island}).  

Our results also specify which population structures do not support cooperation.  For example, on a weighted star (Fig.~\ref{fig:promote}e),  random walks alternate at each step between the hub and a leaf. Any walk of odd length is equivalent to any other. Therefore $t_3=t_1$, which makes cooperation impossible.  A similar argument applies to the unweighted complete bipartite graph (Fig.~\ref{fig:promote}f).

\begin{figure}
\includegraphics[width=\textwidth]{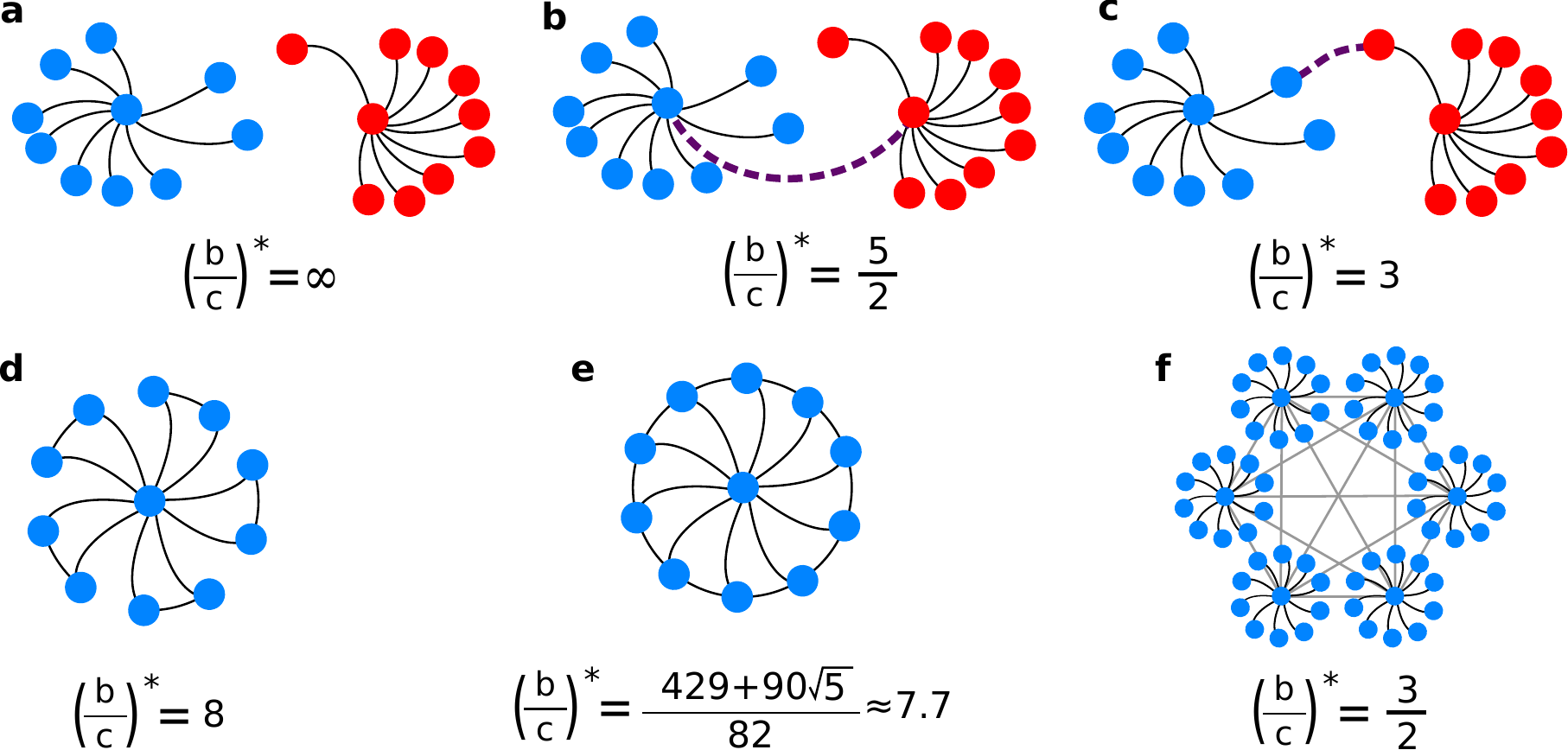}
\caption{\textbf{Rescuing cooperation by graph surgery. a,} The star does not support cooperation; the critical benefit-to-cost ratio is infinite.  \textbf{b,} Joining two stars via their hubs gives $(b/c)^*=5/2$.  \textbf{c,} Joining them via two leaves gives $(b/c)^*=3$.  \textbf{d,} A ``ceiling fan" has $(b/c)^* = 8$. \textbf{e,} A ``wheel'' has  $(b/c)^* = (429+90\sqrt{5})/82$.   \textbf{f,} If we start with $m$ stars and join their hubs in a complete graph, the critical benefit-cost ratio is $(3m-1)/(2m-2)$, which becomes $3/2$ for large $m$.  All $(b/c)^*$ values reported here hold for many leaves. Results for arbitrary sizes are given in  Appendix \ref{sec:joinedstars}.}
\label{fig:surgery}
\end{figure}

In some cases, small changes in graph topology can dramatically alter the fate of cooperation (Fig.~\ref{fig:surgery}).  Stars do not support cooperation, but joining two stars via their hubs allows cooperation for $b/c>5/2$. If we modify a star by linking pairs of leaves to obtain a ``ceiling fan", cooperation is favored for $b/c>8$.
These examples show how targeted interventions in network structure (``graph surgery") can facilitate transitions to more cooperative societies.  Of interest is also a ``dense cluster'' of stars all connected via their hubs which form a complete graph (Fig.~\ref{fig:surgery}f); for this structure the critical benefit-to-cost ratio of  $3/2$ is less than the average degree, $\bar k=2$. 

\begin{figure}
\begin{center}
\includegraphics[width=\textwidth]{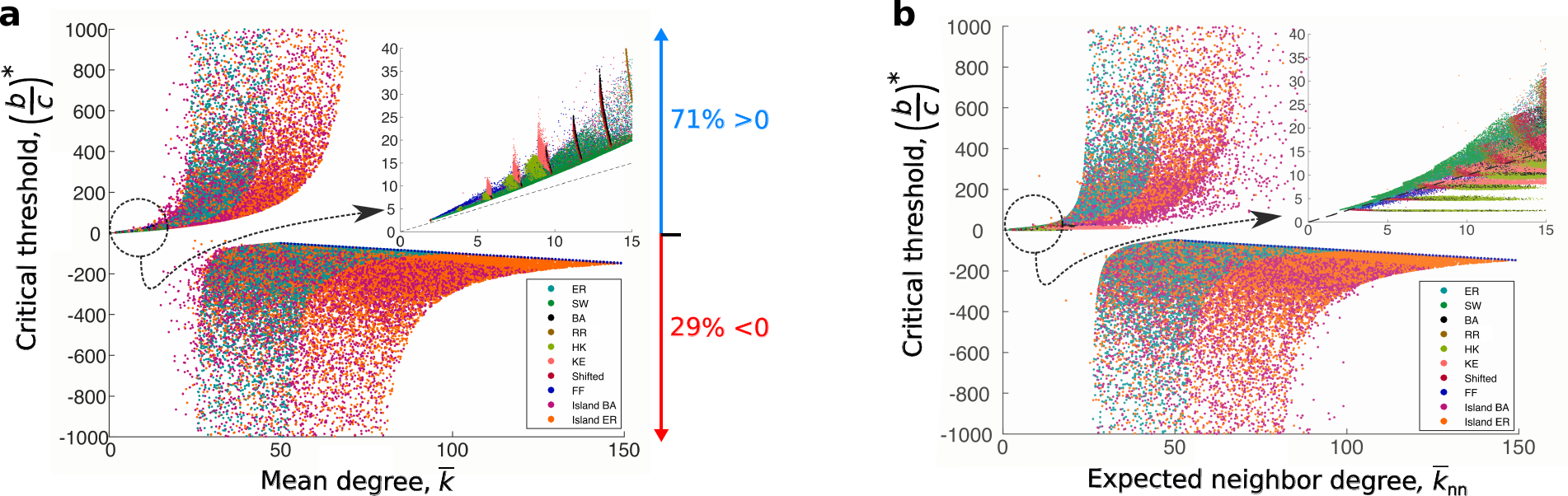}
\caption{\textbf{Conditions for cooperation on 1.3~million random graphs.}  Random graph models are Erd\"{o}s-Renyi (ER), Small World (SW), Barabasi-Albert \cite{barabasi} (BA), random recursive graph \cite{RRT} (RR), Holme-Kim \cite{HK} (HK), Klemm-Eguiliz \cite{KE} (KE), shifted-linear preferential attachment \cite{shifted} (Shifted), Forest Fire \cite{FF} (FF), and meta-networks \cite{island} of BA graphs (Island BA) and ER graphs (Island ER).  Population size $N$ varies from 100 to 150. Parameter values are given in Appendix \ref{sec:computational}.  Values of $(b/c)^*$ were obtained by solving Eq.~\eqref{eq:taurecur} and substituting into Eq.~\eqref{eq:critbc}. \textbf{a,} Scatter plot of $(b/c)^*$ versus mean degree $\bar{k}$.  71\% have positive $(b/c)^*$ and therefore admit the possibility of cooperation.  All positive $(b/c)^*$ values are larger than $\bar{k}$. Negative $(b/c)^*$ values indicate that spite ($b<0, c>0$) can be favored.  \textbf{b,} Scatter plot of $(b/c)^*$ versus $\bar{k}_\mathrm{nn}$, the expected degree of a random neighbor of a randomly chosen vertex, which has been proposed as an approximation to the critical benefit-to-cost ratio on heterogeneous graphs \cite{konno2011condition}.  Although this approximation is reasonable for many graphs, there is significant variation in $(b/c)^*$ for each value of $\bar{k}_\mathrm{nn}$.  The success of cooperation depends on features of the graph topology beyond the summary statistics $\bar{k}$ and $\bar{k}_\mathrm{nn}$.}
\label{fig:million}
\end{center}
\end{figure}  

\begin{figure}
\begin{center}
\includegraphics[width=\textwidth]{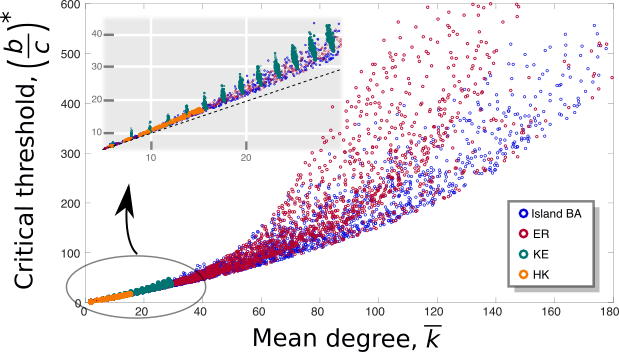}
\caption{\textbf{Conditions for cooperation on 40000 large random graphs.} Size $N$ varies from 300 to 1000.  $10^4$ graphs were generated for each of four random graph models: Erd\"{o}s-Renyi \cite{ER} (ER) with edge probability $0<p<0.25$, Klemm-Eguiluz \cite{KE} (KE) with linking number  $3\leq m \leq 5$ and deactivation parameter $0< \mu <0.15$, Holme-Kim \cite{HK} (HK) with linking number $2 \leq m \leq 4$ and triad formation parameter $0< P < 0.15$, and a meta-network \cite{island} of shifted-linear preferential attachment networks \cite{shifted} (Island BA) with $0< p_{\mathrm{inter}}<0.25$; see Methods for details.}
\label{fig:large}
\end{center}
\end{figure}  

\begin{figure}
\begin{center}
\includegraphics[width=0.75\textwidth]{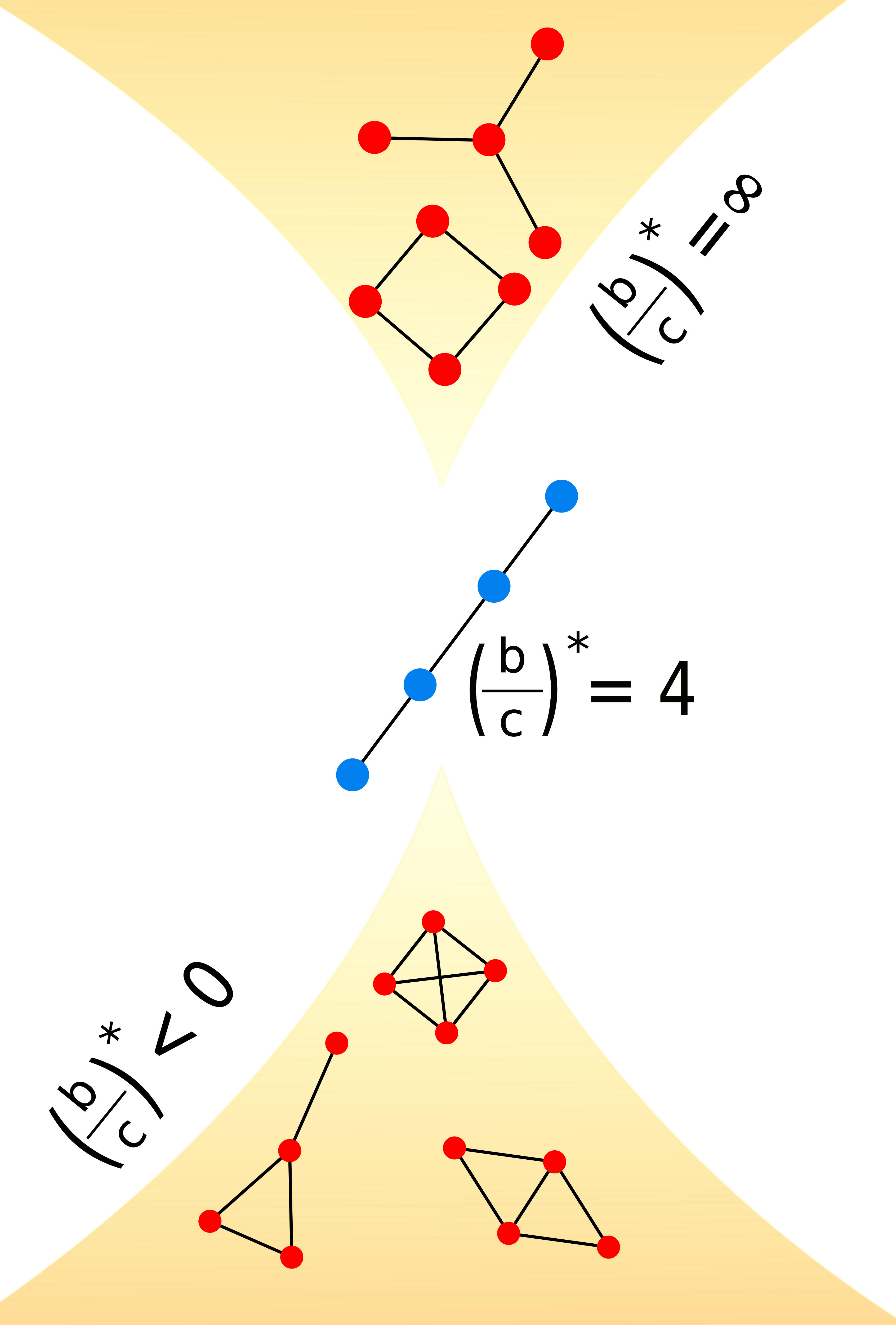}
\caption{\textbf{The critical benefit-cost threshold for all graphs of size four.}  There are six connected, unweighted graphs of size four.  Of these, only the path graph has positive $(b/c)^*$.  Two others have infinite $(b/c)^*$ and three have negative $(b/c)^*$.  There are two connected, unweighted graphs of size three (not shown): the path, which has $(b/c)^* = \infty$, and the complete graph (or triangle), which has $(b/c)^* = -2$.}
\label{fig:all4}
\end{center}
\end{figure}  

\begin{figure}
\begin{center}
\includegraphics[width=0.75\textwidth]{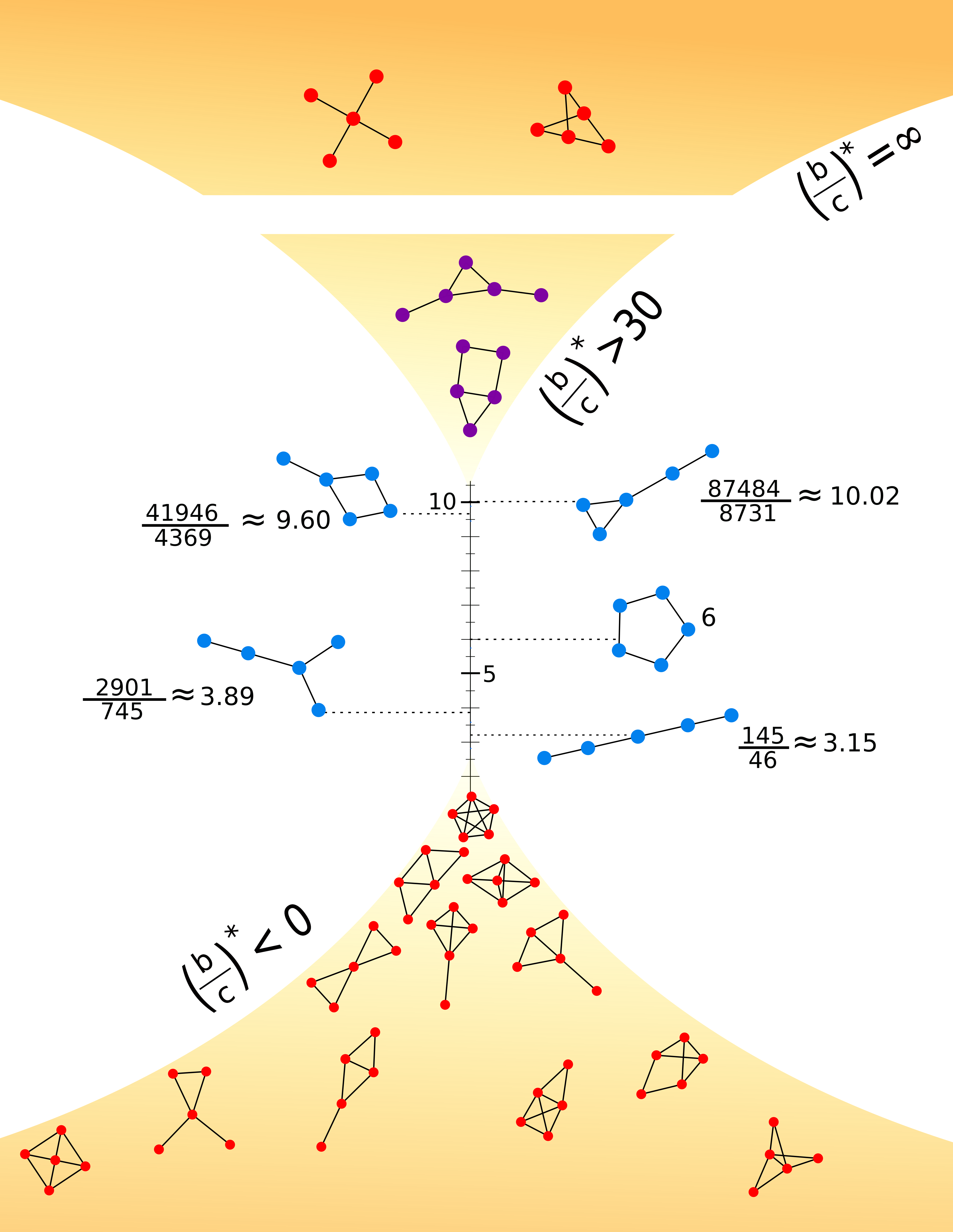}
\caption{\textbf{The critical benefit-cost threshold for all graphs of size five.}  There are 21 connected, unweighted graphs of size five.  Exact values are shown for those with $(b/c)^*$ positive and below 30.  Of the $(b/c)^*$ values, seven are positive, twelve are negative, and two are infinite.  }
\label{fig:all5}
\end{center}
\end{figure}  

\begin{figure}
\begin{center}
\includegraphics[width=0.75\textwidth]{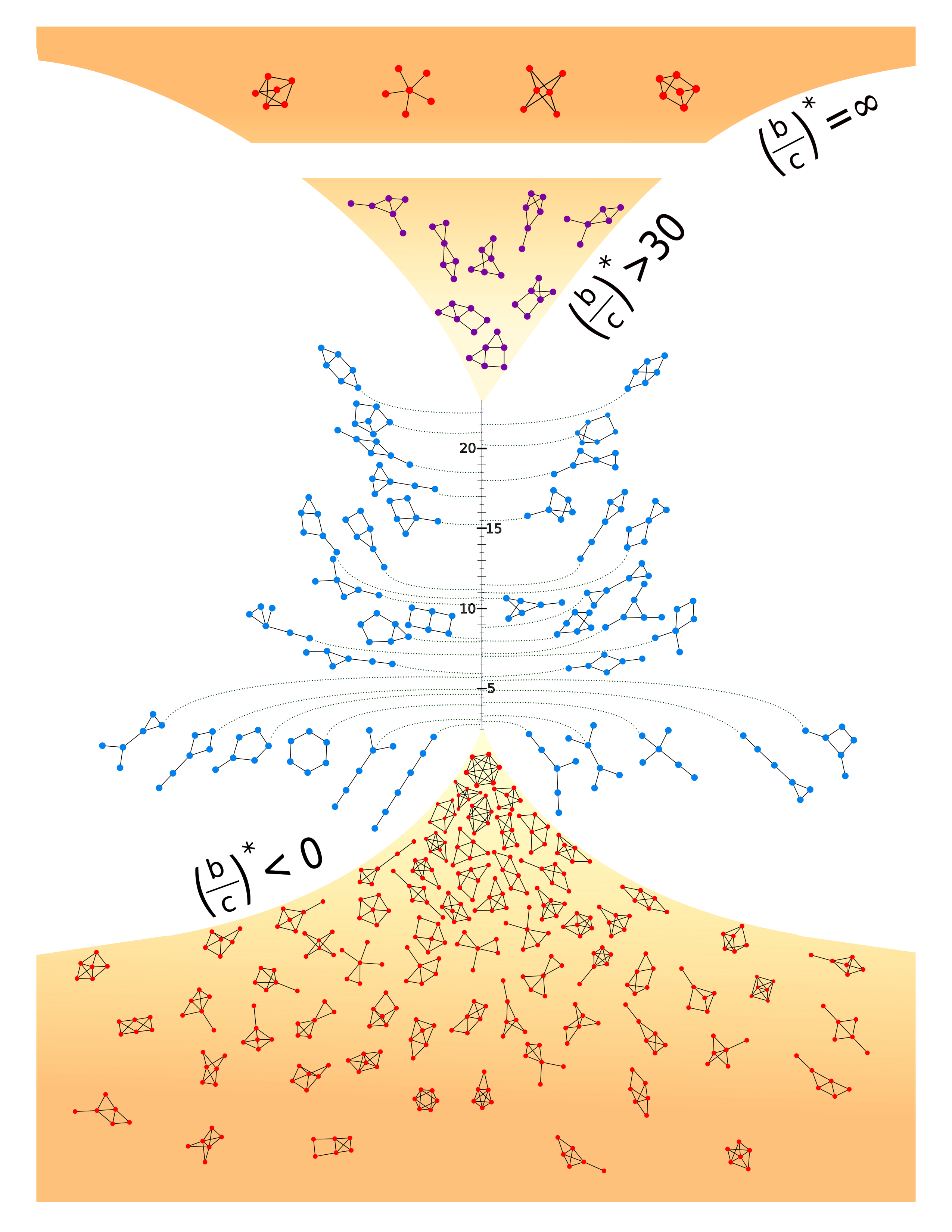}
\caption{\textbf{The critical benefit-cost threshold for all graphs of size six.}  There are 112 connected, unweighted graphs of size six.  Of these, 43 have positive $(b/c)^*$, 65 have negative $(b/c)^*$, and four have $(b/c)^*=\infty$.  Numerical values are shown for those with $(b/c)^*$ positive and below 30.  Significantly, there are graphs with the same degree sequence (for example, $3,2,2,1,1,1$) but different $(b/c)^*$.   Of the 853 graphs of size seven (not shown), 400 have positive $(b/c)^*$, 450 have negative $(b/c)^*$, and three have $(b/c)^*=\infty$. For all graphs of size up to seven, all positive values of $(b/c)^*$ are greater than the mean degree $\bar{k}$.}
\label{fig:all6}
\end{center}
\end{figure}  

\begin{figure}
\includegraphics[width=\textwidth]{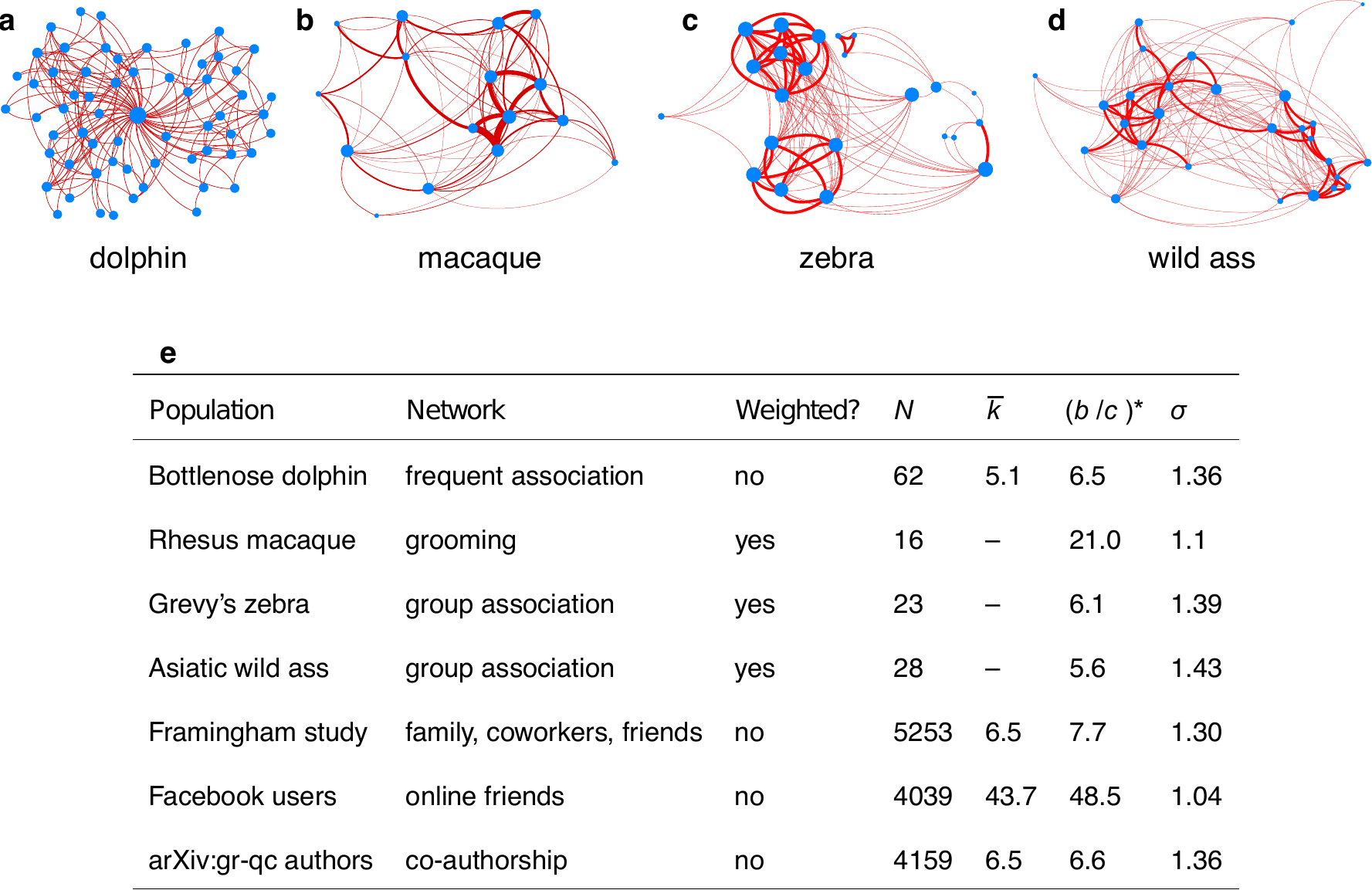}
\caption{\footnotesize{\textbf{Results for empirical networks.} The benefit-cost threshold $(b/c)^*$, or equivalently the structure coefficient \cite{Corina,NowakStructured} $\sigma$, gives the propensity of a population structure to support cooperative and/or Pareto-efficient behaviors.  These values should be interpreted in terms of specific behaviors occurring in a population, and they depend on the network ontology (that is, the meaning of links).  They can be used to facilitate comparisons across populations of similar species, or to predict consequences of changes in population structure.   \textbf{a,} Unweighted social network of frequent associations in bottlenose dolphins (\emph{Tursiops} spp.) \cite{dolphin}. \textbf{b,} Grooming interaction network in rhesus macaques (\emph{Macaca mulatta}), weighted by grooming frequency \cite{rhesus}. \textbf{c,} Weighted network of group association in Grevy's zebras (\emph{Equus grevyi}) \cite{zebra}. Preferred associations, which are statistically more frequent than random, are given weight 1.  Other associations occurring at least once are given weight $\epsilon \ll 1$.  \textbf{d,} Weighted network of group association in Asiatic wild asses (onagers) \cite{zebra}, with the same weighting scheme as for the zebra network. For both zebra and wild ass, the unweighted networks, including every association ever observed, are dense and behave like well-mixed populations. In contrast, the weighted networks, which emphasize close ties, can promote cooperation.  \textbf{e,} Table showing data from panels a--d as well as a social network of family, self-reported friends, and coworkers as of 1971 from the Framingham Heart Study \cite{hill_infectious_2010,christakis2007spread}, a Facebook ego-network \cite{mcauley2012learning},  and the co-authorship network for the General Relativity and Quantum Cosmology category of the arXiv preprint server \cite{leskovec2007graph}. Average degree is reported for unweighted graphs only; for weighted graphs it is unclear which notion of degree is most relevant.  Note that large  $(b/c)^*$ ratios, which correspond to $\sigma$ values very close to one, do not mean that cooperation is never favored. Rather, the implication is that the network behaves similarly to a large well-mixed population, in which cooperation is favored for any $2 \times 2$ game with $a+b>c+d$ .  The donation game does not satisfy this inequality, but other cooperative interactions do \cite{hauert2006synergy,nowak2012evolving}.}}
\label{fig:animal}
\end{figure}

To explore the evolutionary consequences of graph topology on a large scale, we calculated $(b/c)^*$ four four ensembles of graphs: (i) 1.3 million unweighted graphs of sizes $100\leq N \leq 150$, generated by ten  random graph models (Fig.~\ref{fig:million}), (ii) 40K unweighted graphs of sizes $300\leq N \leq 1000$ generated by four random graph models (Fig.~\ref{fig:large}), (iii) every unweighted graph of size up to seven (Fig.~\ref{fig:all4}--\ref{fig:all6}), and (iv) seven empirical human and animal social networks (Fig.~\ref{fig:animal}).  In general we find that, as the average degree, $\bar{k}$,  increases, cooperation becomes increasingly difficult and eventually impossible.  However, there is considerable variance in $(b/c)^*$ for each value of $\bar{k}$.  The success of cooperation is not determined by the average degree, or even the entire degree sequence (Fig.~\ref{fig:all6}).

\begin{figure}
\begin{center}
\includegraphics[scale=0.8]{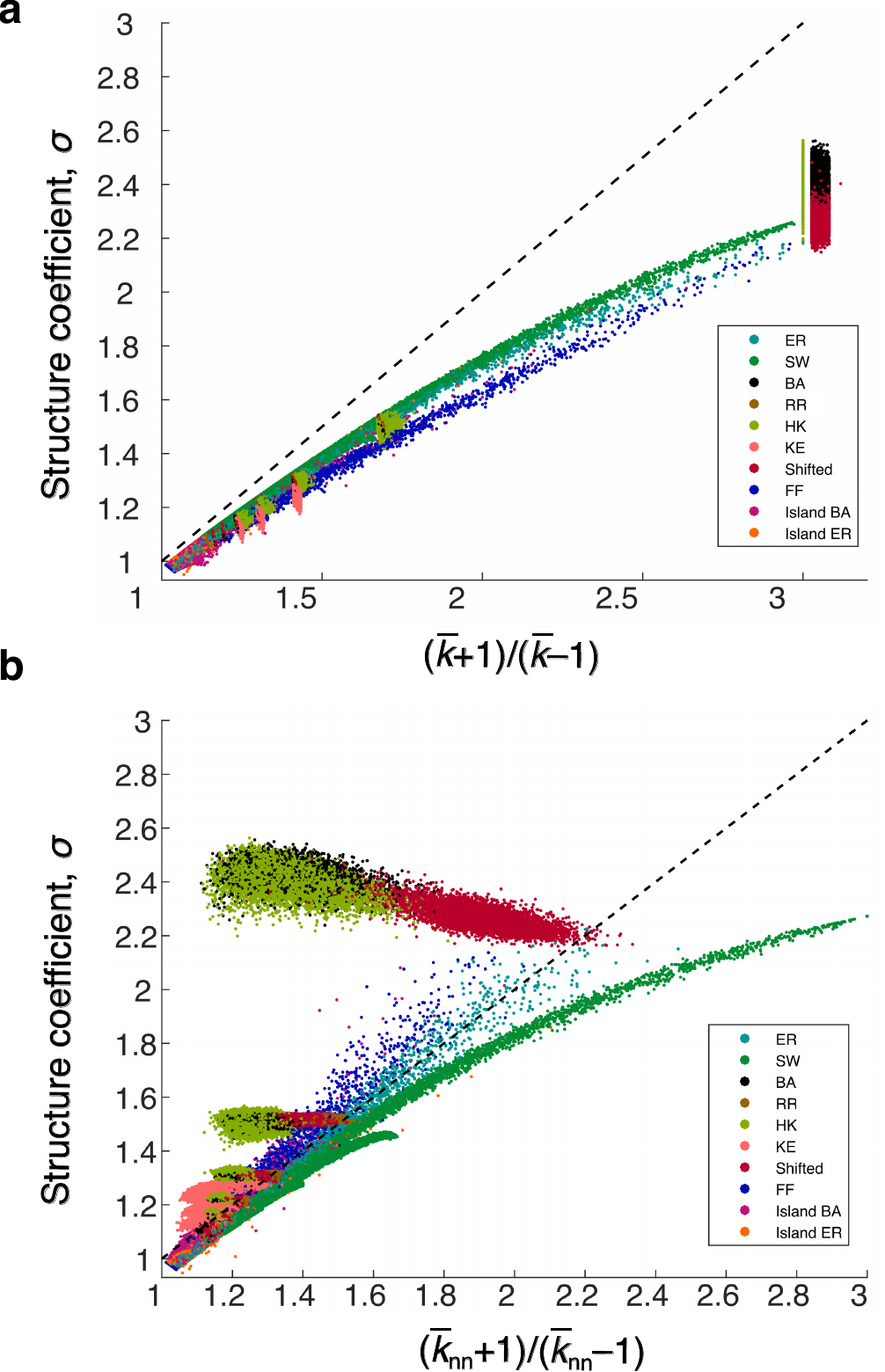}
\caption{\textbf{Structure coefficients for 1.3~million random graphs.}  We computed the structure coefficient \cite{Corina} $\sigma = [(b/c)^*+1]/[(b/c)^*-1]$ for the same ensemble of random graphs as in Fig.~\ref{fig:million} of the main text.  Strategy $A$ is favored over strategy $B$ under weak selection if $\sigma a + b > c + \sigma d$; see Eq.~\eqref{eq:game} of Methods.  \textbf{a,} Scatter plot of $\sigma$ versus $(\bar{k}+1)/(\bar{k}-1)$, which is the $\sigma$-value for a regular graph of the same mean degree $\bar{k}$.  \textbf{b,} Scatter plot of $\sigma$ versus $(\bar{k}_\mathrm{nn}+1)/(\bar{k}_\mathrm{nn}-1)$, which is the $\sigma$-value one would expect if the condition \cite{konno2011condition} $b/c > \bar{k}_\mathrm{nn}$ were exact.  Here, $\bar{k}_\mathrm{nn}$ is the expected degree of a neighbor of a randomly chosen vertex.}
\label{fig:sigmamillion}
\end{center}
\end{figure} 

So far we have discussed the donation game \eqref{eq:donation}, but our theory extends to any pairwise game interaction of the form \eqref{eq:game}.  On any graph, the condition for natural selection to favor strategy A over strategy B, in the sense that $\rho_\A>\rho_\B$ for weak selection, can be written \cite{Corina} as $\sigma a +b > c+ \sigma d$. Our result implies that $\sigma = (-t_1+t_2+t_3)/ (t_1+t_2-t_3)$.  Therefore, the key quantity $\sigma$ can be calculated for any graph by solving a system of linear equations (Fig.~\ref{fig:sigmamillion}).  The value of $\sigma$ quantifies the extent to which a graph supports cooperation in a social dilemma, or the Pareto-efficient equilibrium in a coordination game \cite{Corina,NowakStructured}.

Our model can be extended in various ways.  Birth-Death updating \cite{Ohtsuki} can be studied. Total instead of average payoff can be used to compute reproductive rates \cite{maciejewski2014evolutionary}.  Different graphs can specify interaction and replacement \cite{OhtsukiBreaking,Taylor,allen2014games}.  Mutation can be introduced \cite{AllenGraphMut,allen2014games}. For each of these variations, we obtain the exact critical benefit-to-cost ratios (and $\sigma$ values) in terms of coalescence times (Appendix \ref{sec:variations}).  Additionally, appendix \ref{sec:DirectInclusive} provides interpretations of our results in terms of direct fitness and---in the special case of the donation game \eqref{eq:donation}---inclusive fitness.  

In summary, we report here the first analytic result describing how natural selection
chooses between competing strategies on any graph for weak selection. Our framework applies to strategic interactions among humans in social
networks, as well as ecological and evolutionary interactions among simpler organisms
in any kind of spatially structured population (Fig.~\ref{fig:animal}). Our results reveal which population structures promote certain behaviors, such as cooperation. We find that cooperation
flourishes most in the presence of strong pairwise ties, which is an intriguing 
mathematical argument for the importance of stable partnerships in forming
the backbone of cooperative societies.  

\section*{Acknowledgements} Supported by a grant from the John Templeton Foundation.  The Program for Evolutionary Dynamics is supported in part by a gift from B.~Wu and Eric Larson.  We are grateful to Siva R.~Sundaresan and Daniel I.~Rubenstein for providing us with data on zebra and wild ass networks.

\appendix

\section{Model and notation}

\subsection{Graph}

Population structure is represented by a weighted, connected graph $G$ with edge weights $w_{ij}, \; i,j \in G$. $G$ is undirected: $w_{ji} = w_{ij}$ for each $i,j \in G$.  Self-loops are allowed, and are represented by the weights $w_{ii}$.  Self-loops represent interaction with oneself and replacement by one's own offspring.  

We define the \emph{weighted degree} of vertex $i$ as $w_i=\sum_{j \in G} w_{ij}$.  The total sum of all edge weights (counting both directions for each pair of vertices) is denoted $W$:
\[
W = \sum_{i,j \in G} w_{ij} = \sum_{i \in G} w_i.
\]

\subsection{Random walks}

We will make use of random walks on $G$ in both discrete and continuous time.  For a random walk on $G$, steps are taken with probability proportional to edge weight; thus the probability of a step from $i$ to $j$ is $p_{ij} = w_{ij}/w_i$.  The probability that an $n$-step walk from vertex $i$ terminates at vertex $j$ is denoted $p_{ij}^{(n)}$.  

There is a unique stationary distribution $\{\pi_i\}_{i \in G}$ for random walks on $G$, in which vertex $i$ has stationary probability $\pi_i = w_i/W$.  This means that for each $i,j \in G$, $\lim_{n \to \infty} p_{ij}^{(n)} = \pi_j$.  Random walks have the reversibility property that for each $i,j \in G$, $\pi_i p_{ij}^{(n)} = \pi_j p_{ji}^{(n)}$.

We will use the following shorthand: for any function $g_i$ on $G$, we define 
\[
g_i^{(n)} = \sum_{j \in G} p_{ij}^{(n)} g_j.
\]
That is, $g_i^{(n)}$ is the expected value of $g_j$ where $j$ is the terminus of an $n$-step random walk from $i$.  For any function $h_{ij}$ on $G \times G$, we define 
\[
h^{(n)} = \sum_{i,j \in G} \pi_i p_{ij}^{(n)} h_{ij}.
\]
In words, $h^{(n)}$ is the expected value of $h_{ij}$ where $i$ and $j$ are the two ends of an $n$-step random walk on $G$ started from the stationary distribution.

\subsection{Evolutionary Markov chain}

For most of this Supplement we consider a continuous-time version of the Death-Birth (DB) process.  The translation from discrete to continuous time does not affect fixation probabilities or other metrics for evolutionary success.

The state of the process is represented as a binary vector $\vs = (s_i)_{i \in G} \in \{0,1\}^G$, where 0 and 1 correspond to the two types and $s_i \in \{0,1\}$ denotes the type of vertex $i$.  

Evolution is modeled as a continuous-time Markov chain $( \vS(t))_{t \geq 0}$ on $\{0,1\}^G$, which we call the \emph{evolutionary Markov chain}.  The dynamics of the evolutionary Markov chain are described in the following subsections.

\subsection{Payoff and reproductive rate}

The edge-weighted average payoff to vertex $i$ in state $\vs$ is denoted $f_i(\vs)$.  We first consider the general game
\begin{equation}
\label{eq:game}
\bordermatrix{ & \mathrm{A} & \mathrm{B}\cr \mathrm{A} & a & b\cr \mathrm{B} & c & d},
\end{equation}
Letting 1 correspond to A and 0 to B, we have
\begin{equation}
\label{eq:fgeneral}
f_i(\vs) = a s_i s_i^{(1)} + b s_i \left(1-s_i^{(1)} \right) 
+ c (1-s_i) s_i^{(1)} + d (1-s_i)  \left(1-s_i^{(1)} \right).
\end{equation}
For the donation game
\begin{equation}
\label{eq:PD}
\bordermatrix{
& \C & \D\cr
\C & b-c & -c\cr
\D & b & 0 },
\end{equation}
letting 1 correspond to C and 0 to D, Eq.~\eqref{eq:fgeneral} reduces to
\begin{equation}
\label{eq:fPD}
f_i(\vs) = -cs_i + b \sum_{j \in G} p_{ij} s_j.
\end{equation}

The reproductive rate of vertex $i$ in state $\vs$ is $F_i(\vs) = 1 + \delta f_i(\vs)$, where $\delta > 0$ quantifies the strength of selection.  

\subsection{Transitions}

State transitions in the evolutionary Markov chain occur via \emph{replacement events}, in which the occupant of one vertex is replaced by the offspring of another.  We denote by $i \rightarrow j$ the event that the occupant of $j \in G$ is replaced by the offspring of $i \in G$.  Replacement events occur as Poisson processes, with rates depending on the state $\vs$.  The Death-Birth process is defined using the rates
\begin{equation}
\label{eq:DBrate}
\operatorname{Rate}[i \rightarrow j](\vs) = \frac{w_{ij} F_i(\vs)}{\sum_{k \in G} w_{kj} F_k(\vs)}.
\end{equation}
According to Eq.~\eqref{eq:DBrate}, each vertex is replaced at overall rate 1.  The conditional probability that $i$ reproduces, given that vertex $j$ is replaced, is proportional to $w_{ij} F_i(\vs)$.  In this way, Eq.~\eqref{eq:DBrate} defines a continuous-time analogue of the DB process described in the main text.

If the replacement event $i \rightarrow j$ occurs in state $\vs$, and $s_j \neq s_i$, then a state transition occurs and the new state $\vs'$ is defined by $s_j'=s_i$ and $s_k'=s_k$ for all $k \neq j$.  (That is, vertex $j$ inherits the type of vertex $i$, and all other vertices retain their type.)  If $s_j = s_i$, then no transition occurs and $\vs'=\vs$.

\section{Fixation probability under weak selection}
\label{sec:fixweak}

The evolutionary Markov chain has two absorbing states: the state $\mathbf{1}$ for which $s_i = 1$ for all $i \in G$, and the state $\mathbf{0}$ for which $s_i=0$ for all $i \in G$.  These states correspond to the fixation of types B and A, respectively.  All other states of the evolutionary Markov chain are transient \cite[Theorem 2]{allen2014measures}.  Thus from any given initial state, the evolutionary Markov chain will eventually become absorbed in one of these two fixation states.  We denote by $\rho_{\vs_0}$ the fixation probability of type A from state $\vs_0 \in \{0,1\}^G$---that is, the probability that, from initial state $\vs_0$, the evolutionary Markov chain becomes absorbed in state $\mathbf{1}$. 

We are interested in the behavior of these fixation probabilities under weak selection---that is, to first order in $\delta$ as $\delta \to 0^+$.  Chen \cite{chen2013sharp} derived a weak-selection perturbation formula for $\rho_{\vs_0}$, which applies to the birth-death and death-birth processes considered here.  To provide intuition for this formula, we give a heuristic derivation here, referring to \cite{chen2013sharp} for mathematical details.

Our analysis focuses on the \emph{degree-weighted frequency} of type A:
\[
\hat{s} = \sum_{i \in G} \pi_i s_i.
\]
The degree weighted frequency at time $t$ is represented by the random variable
\[
\hat{S}(t) = \sum_{i \in G} \pi_i S_i(t).
\]
The main idea is that $\hat{S}(t)$ is a Martingale for the neutral process ($\delta=0$); and that weak selection can be understood as a perturbation of this Martingale.  The weighting $\pi_i$ of vertex $i$ can be understood as its reproductive value \cite{taylor1990allele,maciejewski2014reproductive}.  Similar arguments have been used in other contexts \cite{TaylorHow,leturque2002dispersal,RoussetBook,LessardFixation,tarnita2014measures,van2015social}.

Consider the evolutionary Markov chain with arbitrary initial state $\vS(0)=\vs_0 \in \{0,1\}^G$.  By the Fundamental Theorem of Calculus, the expected degree-weighted frequency $\E_{\vs_0} [\hat{S}(T)]$ at time $T>0$ satisfies
\begin{equation}
\E_{\vs_0} [\hat{S}(T) ] = \hat{s}_0 + \int_0^T \frac{d}{dt} \E_{\vs_0} [\hat{S}(t)] \; dt.
\end{equation}
In the limit $T \to \infty$, the expected degree-weighted frequency of type 1 becomes equal to its fixation probability; therefore we have
\begin{equation}
\label{eq:rhointegral}
\rho_{\vs_0} = \hat{s}_0 + \int_0^\infty \frac{d}{dt} \E_{\vs_0}[\hat{S}(t)] \; dt.
\end{equation}

We now define a state function $D(\vs)$ giving the expected instantaneous rate of change in the degree-weighted frequency of type A from state $\vs$.  $D(\vs)$ is defined by the relation
\begin{equation}
\label{eq:Ddef}
\E \left[\hat{S}(t+\epsilon)-\hat{S}(t) \big| \hat{S}(t)=\vs \right] \; = \;  D(\vs) \epsilon + o(\epsilon) \qquad (\epsilon \to 0^+).
\end{equation}
The exact form of $D(\vs)$ for DB updating is derived in Section \ref{sec:shortterm}.  Substituting in Eq.~\eqref{eq:rhointegral}, we obtain
\begin{equation}
\label{eq:rhoD}
\rho_{\vs_0}= \hat{s}_0 + \int_0^\infty \E_{\vs_0} [D(\vS(t))] \; dt
\end{equation}

We now consider the case $\delta = 0$, which represents neutral drift (no selection).  We indicate neutral drift by the superscript ${}^\circ$.  We will show in Section \ref{sec:shortterm} that, under DB updating, $D^\circ(\vs)=0$ for all $\vs \in \{0,1\}^G$, meaning that $\hat{S}(t)$ is a Martingale for neutral drift. It follows from Eq.~\eqref{eq:rhoD} that $\rho_{\vs_0}^\circ = \hat{s}_0$; that is, the fixation probability of type A, under neutral drift, is equal to its initial degree-weighted frequency (i.e.,~its initial reproductive value \cite{taylor1990allele,maciejewski2014reproductive}).  In particular, a neutral mutation arising at vertex $i$ has fixation probability $\pi_i$.

We now turn to weak selection; that is, we consider asymptotic expansions as $\delta \to 0^+$.  Since $D^\circ(\vs)=0$, we have
\begin{equation}
\label{eq:Dweak}
D(\vs) = \delta D'(\vs) + \mathcal{O}(\delta^2).
\end{equation}

This allows us to expand the integrand in Eq.~\eqref{eq:rhointegral}:
\begin{align}
\nonumber
\E_{\vs_0}[D(\vS(t))] & = \sum_\vs \Prob_{\vs_0}[\vS(t)=\vs] D(\vs)\\
\nonumber
& = \delta \sum_\vs \Prob_{\vs_0}^\circ[\vS(t)=\vs] D'(\vs) +\mathcal{O}(\delta^2) \\
\label{eq:EDexpand}
& = \delta \E_{\vs_0}^\circ[D'(\vS(t))] +\mathcal{O}(\delta^2).
\end{align}

If we could freely interchange the expansion \eqref{eq:EDexpand} with the integral in Eq.~\eqref{eq:rhointegral}, we would have
\begin{equation}
\label{eq:rhoweak}
\rho_{\vs_0}= \hat{s}_0 + \delta \int_0^\infty \E_{\vs_0}^\circ [D'(\vS(t))] \; dt + \mathcal{O}(\delta^2).
\end{equation}
A formal justification for this interchange and proof of Eq.~\eqref{eq:rhoweak}, for a class of models that includes the DB process considered here, is given in Theorem 3.8 of Chen \cite{chen2013sharp}.  

For convenience, we introduce the following notation: for any function of the state $g(\vs)$ and any initial state $\vs_0$, we define
\begin{equation}
\langle g \rangle^\circ_{\vs_0} = \int_0^\infty \E_{\vs_0}^\circ [g(\vS(t))] \; dt.
\end{equation}
Thus Eq.~\eqref{eq:rhoweak} can be rewritten as
\begin{equation}
\label{eq:rhoweak2}
\rho_{\vs_0}= \hat{s}_0 + \delta \langle D' \rangle^\circ_{\vs_0} + \mathcal{O}(\delta^2).
\end{equation}

Of particular biological importance is case that type 1 initially occupies only a single vertex.  In biological terms, a single indivdual of a new type has appeared in the population, e.g.~by mutation or migration.  We therefore focus in particular on initial states $\vs_0$ with exactly one vertex of type 1.  Let $\mathbf{u}$ be the probability distribution over states that assigns probability $1/N$ to all states $\vs$ with exactly one vertex of type 1, and probability zero to all other states.  (That is, we suppose that the new type is equally likely to arise at each vertex.  This is a natural assumption for DB updating, but not necessarily for other update rules \cite{allen2014measures,tarnita2014measures,allen2015molecular}.)  We use the subscript $\mathbf{u}$ to denote the expected value of a quantity when the initial state of the evolutionary Markov chain is sampled from $\mathbf{u}$.  

We define the overall fixation probability of type A as $\rho_\mathrm{A}=\rho_{\mathbf{u}}$, the probability that A becomes fixed when starting from a single vertex chosen with uniform probability.  Taking the expectation of Eq.~\eqref{eq:rhoweak2} with $\vs_0$ sampled from $\mathbf{u}$, we have
\begin{equation}
\label{eq:rhoweaku1}
\rho_\mathrm{A}= \frac{1}{N} + \delta \langle D' \rangle^\circ_{\mathbf{u}} + \mathcal{O}(\delta^2).
\end{equation}

\section{Instantaneous change under weak selection}
\label{sec:shortterm}

Here we compute the expected instantaneous rate of degree-weighted frequency change $D(\vs)$ from a state $\vs$, under DB updating and weak selection.  Note that if the event $i \to j$ occurs (that is, if $j$ is replaced by the offspring of $i$), the resulting change in $\hat{s}$ is $\pi_j(s_i-s_j)$.  Thus the expected instantaneous rate of degree-weighted frequency change is given by
\begin{align*}
D(\vs) & =  \sum_{j \in G} \pi_j\left( -s_j + \sum_{i \in G} s_i \frac{w_{ij} F_i(\vs)}{\sum_{k \in G} w_{kj} F_k(\vs)}\right)\\
& = \sum_{i \in G} s_i \left( - \pi_i + \sum_{j \in G}  \pi_j \frac{w_{ij} F_i(\vs)}{\sum_{k \in G} w_{kj} F_k(\vs)}\right) \\
& = \delta \sum_{i \in G}  s_i \left( \pi_i f_i(\vs)  - \pi_i\sum_{k \in G} p_{ik}^{(2)} f_k(\vs) \right) + \mathcal{O}(\delta^2)\\
& = \delta \sum_{i \in G} \pi_i  s_i \left( f_i^{(0)}(\vs) - f_i^{(2)}(\vs) \right) + \mathcal{O}(\delta^2).
\end{align*}
This shows that $D^\circ(\vs)=0$ for all states $\vs$ (thus $\hat{S}(t)$ is a Martingale as claimed earlier), and
\begin{equation}
\label{eq:D's}
D'(\vs) = \sum_{i \in G} \pi_i  s_i \left( f_i^{(0)}(\vs) - f_i^{(2)}(\vs) \right).
\end{equation}

For the special case of the donation game \eqref{eq:PD}, we have
\begin{equation}
\label{eq:D'sPD}
D'(\vs) = \sum_{i \in G} \pi_i s_i \left( -c \left(s_i^{(0)}-s_i^{(2)} \right) + b \left(s_i^{(1)} - s_i^{(3)} \right) \right).
\end{equation}
Applying Eq.~\eqref{eq:rhoweaku1}, the fixation probability of cooperation is given by 
\begin{equation}
\label{eq:sfix}
\rho_\C = \frac{1}{N} +  
\delta \sum_{i \in G} \pi_i \left ( - c \left \langle s_i \left(s_i^{(0)} - s_i^{(2)} \right) \right \rangle^\circ_{\mathbf{u}}
+ b \left \langle s_i \left(s_i^{(1)} - s_i^{(3)} \right)\right \rangle^\circ_{\mathbf{u}} \right)
+ \mathcal{O}(\delta^2).
\end{equation}
In the following section we will show how the quantities in Eq.~\eqref{eq:sfix} can be computed using coalescence times.

\section{Coalescing random walks}
\label{sec:CRW}

A \emph{coalescing random walk (CRW)} \cite{cox1989coalescing,liggett2006interacting} is a collection of random walks on $G$ that step independently until two walks meet (or ``coalesce"), after which these two step together.  This process models the tracing of ancestors backwards in time.  We will consider both continuous-time and discrete-time versions of the coalescing random walk, starting in either case with two walkers.

\subsection{Continuous-time and discrete-time CRWs}

In the continuous-time version, we consider a pair of walkers $(X(t), Y(t))_{t \geq 0}$ with arbitrary initial vertices  $X(0)=i$ and $Y(0)=j$.  Each steps at Poisson rate $1$, corresponding to the rate at which sites are replaced in the continuous-time Death-Birth process.  $X(t)$ and $Y(t)$ step independently until their time of coalescence (first meeting), which is denoted $T_\coal$.  After this time, $X(t)$ and $Y(t)$ step together, so that $X(t)=Y(t)$ for all $t > T_\coal$. Probabilities and expectations under the continuous-time coalescing random walk from $i$ and $j$ are denoted $\Prob_{(i,j)}^\mathrm{CRW}[\;]$ and $\E_{(i,j)}^\mathrm{CRW}[\;]$ respectively.

In the discrete-time version, we consider a pair of walkers $(X(t), Y(t))_{t =0}^\infty$, again with arbitrary initial vertices  $X(0)=i$ and $Y(0)=j$. At each time step $t=0, 1, \ldots$, if $X(t) \neq Y(t)$, one of the two walkers is chosen, with equal probability, to take a random walk step.  If $X(t) = Y(t)$, then both make the same random walk step.  The coalescence time $T_\coal$ is defined as the first time $t$ for which $X(t)=Y(t)$. We use tildes to indicate the discrete-time process, so that probabilities and expectations under the discrete-time coalescing random walk from $i$ and $j$ are denoted $\tilde{\Prob}_{(i,j)}^\mathrm{CRW}[\;]$ and $\tilde{\E}_{(i,j)}^\mathrm{CRW}[\;]$ respectively.

We note that there is a difference in time-scales for the discrete-time and continuous-time CRWs. In the continuous-time CRW, two steps occur per unit time on average (since each walker steps at rate 1). In the discrete-time CRW, one step in total is taken per unit time.

\subsection{Coalescence times}
\label{sec:coalescencetimes}

We denote the expected coalescence time from $i$ and $j$ in the discrete-time coalescing random walk by $\tau_{ij} = \tilde{\E}^\mathrm{CRW}_{(i,j)}[T_\mathrm{coal}]$.  Because coalescence times can be understood as hitting times to the diagonal of $G \times G$, and because expected hitting times are preserved under transitions between discrete and continuous time 
\cite[\S 2.5.3]{aldous2002reversible}, we have $\tau_{ij} = 2 \E^\mathrm{CRW}_{(i,j)}[T_\coal]$.  (The factor of two is due to the difference in time-scales.)

Now suppose that $i$ and $j$ are the two ends of a random walk of length $n$ started from the stationary distribution.  Taking the expectation of $\tau_{ij}$ over all such choices, we obtain the quantity $\tau^{(n)}$, which is denoted  denoted $t_n$ in the main text:
\begin{equation}
t_n = \tau^{(n)} = \sum_{i,j \in G} \pi_i p_{ij}^{(n)} \tau_{ij}.
\end{equation}

\subsection{Assortment and coalescence}
\label{sec:assortment}

Coalescing random walks represent the ancestry of a set of individuals traced backwards in time, and can therefore be used to study assortment.  Mathematically, coalescing random walks are dual to the neutral ($\delta=0$) case of our model.  This duality implies that, for any initial state $\vs_0$ and any pair of types $x,y \in \{0,1\}$,
\begin{equation}
\label{eq:voterduality}
\Prob^\circ_{\vs_0} \big[s_i(t)=x,s_j(t)=y \big] 
\; = \; \Prob^\CRW_{(i,j)} \left[(\vs_0)_{X(t)} = x, (\vs_0)_{Y(t)} = y \right].
\end{equation}
In biological language, this relation says that the current occupants of vertices $i$ and $j$ have the same types as their corresponding ancestors in the initial state.  

We first apply this result to an initial state $\vs_0$ that has a single vertex $k$ of type 1 and all others of type 0: $(\vs_0)_k = 1$ and $(\vs_0)_\ell = 0$ for all $\ell \neq k$.  Consider the random variable $S_i(t) \, S_j(t)$, which equals one if vertices $i$ and $j$ both have type 1 at time $t$ and zero otherwise.  Applying the duality relation \eqref{eq:voterduality}, we find
\begin{align}
\nonumber
\E^\circ_{\vs_0} \big[S_i(t) \, S_j(t) \big] 
& = \Prob^\CRW_{(i,j)} \left[X(t) =Y(t) = k \right]\\
\label{eq:coalescek}
& = \Prob^\CRW_{(i,j)} \left[T_\coal < t,  X(t) = k \right].
\end{align}
That is, vertices $i$ and $j$ both have type 1 at time $t$ if and only if they are both descended from the original vertex $k$ of type 1; in particular, this requires that the lineages of these vertices have coalesced before time $t$ in the past.  

We next consider the initial state sampled from probability distribution $\mathbf{u}$ (defined in Section \ref{sec:fixweak}), which corresponds to setting a randomly chosen vertex to type 1 (with uniform probability) and all others vertices to type 0. Applying Eq.~\eqref{eq:coalescek}, we have
\begin{align}
\nonumber
\E^\circ_{\mathbf{u}} \big[S_i(t) \, S_j(t) \big] 
& = \frac{1}{N} \sum_{k \in G} \Prob^\CRW_{(i,j)}  \left[T_\coal < t,  X(t) = k \right]\\
& = \frac{1}{N} \Prob^\CRW_{(i,j)} \left[T_\coal < t \right].
\end{align}
The second equality follows from the law of total probability, since we have summed over all possible values of $X(t)$.  

We now move from a particular time $t$ to a time-average, using the operator $\langle \; \rangle^\circ_{\mathbf{u}}$. We calculate:
\begin{align}
\nonumber
\left \langle \frac{1}{N} - s_is_j \right \rangle^\circ_{\mathbf{u}} 
& = \int_0^\infty \left( \frac{1}{N} -\E^\circ_{\mathbf{u}} \big[S_i(t) \, S_j(t) \big] \right)   dt\\
\nonumber
& = \frac{1}{N} \int_0^\infty \left( 1-\Prob^\CRW_{(i,j)} \left[T_\coal< t \right] \right)  dt\\
\nonumber
& = \frac{1}{N} \E^\CRW_{(i,j)} \left[T_\coal \right] \\
\label{eq:sisjtau}
& = \frac{\tau_{ij}}{2N}.
\end{align}
(The factor of 2 arises from the translation from continuous time to discrete time.) In particular, this entails that for $n_1, n_2 \geq 0$
\begin{equation}
\label{eq:sinmin}
\left \langle \sum_{i \in G} \pi_i s_i \left( s_i^{(n_1)} - s_i^{(n_2)} \right) \right \rangle^\circ_{\mathbf{u}}  
= \frac{\tau^{(n_2)}-\tau^{(n_1)}}{2N}.
\end{equation}

\subsection{Recurrence relations for coalescence times}
\label{sec:recur}

The coalescence times $\tau_{ij}$ satisfy the recurrence relation
\begin{equation}
\label{eq:mrecur}
\tau_{ij} = \begin{cases} 0 & i=j\\
1 + \frac{1}{2}  \sum_{k \in G} \left(p_{ik} \tau_{jk} + p_{jk} \tau_{ik} \right)& i \neq j.
\end{cases}
\end{equation}
Eq.~\eqref{eq:mrecur} is a system of $\binom{N}{2}$ linear equations.  The connectedness of $G$ implies that this system has a unique solution.  All coalescence times $\tau_{ij}$ can be therefore obtained in polynomial time (see Section \ref{sec:computational} for a discussion of algorithms and their efficiency).

We will also make use of the concept of \emph{remeeting times}.  To define these we introduce a process called the \emph{remeeting random walk (RRW)}, which consists of two random walks $(X(t), Y(t))$ both starting at the same vertex $i \in G$: $X(0)=Y(0)=i$.  These random walks step independently until their time $T_\mathrm{remeet}$ of first remeeting, after which they step together.  As in the CRW, we can consider this process in either continuous time (with each walk proceeding at rate 1) or discrete time (with one of the two walkers stepping at each timestep).  We let $\tau_{ii}^+ = \tilde{\E}^{\mathrm{RRW}}_i[T_\mathrm{remeet}]$ denote the expected remeeting time in the discrete-time RRW.  ($\tau_{ii}^+$ is shortened to $\tau_i$ in the main text.) A simple recurrence argument shows that 
\begin{equation}
\label{eq:miirecur}
\tau_{ii}^+ = 1 + \sum_{j\in G} p_{ij} \tau_{ij}
\end{equation}

To derive a recurrence relation for the $\tau^{(n)}$, we expand according to Eq.~\eqref{eq:mrecur}:
\begin{align*}
\tau^{(n)} & = \sum_{i,j \in G} \pi_i p_{ij}^{(n)} \tau_{ij}\\
& =  \sum_{\substack{i,j \in G \\ i \neq j}} \pi_i p_{ij}^{(n)} \left( 1 + \frac{1}{2} \sum_{k \in G} (p_{ik} \tau_{jk} + p_{jk} \tau_{ik}) \right)\\
& = \sum_{i,j \in G} \pi_i p_{ij}^{(n)} \left( 1 + \frac{1}{2}  \sum_{k \in G} (p_{ik} \tau_{jk} + p_{jk} \tau_{ik}) \right) - \sum_{i \in G} \pi_i p_{ii}^{(n)}\left( 1+ \sum_{k \in G} p_{ik} \tau_{ik} \right)\\
& =  \sum_{i,j \in G} \pi_i p_{ij}^{(n)}  + \frac{1}{2}  \sum_{i,j,k \in G} \pi_j p_{ji}^{(n)} p_{ik} \tau_{jk} + \frac{1}{2} \sum_{i,j,k \in G} \pi_i p_{ij}^{(n)} p_{jk} \tau_{ik} - \sum_{i \in G} \pi_i p_{ii}^{(n)} \tau_{ii}^+\\
& =  \sum_{i \in G} \pi_i   + \frac{1}{2}  \sum_{j,k \in G} \pi_j p_{jk}^{(n+1)} \tau_{jk} + \frac{1}{2} \sum_{i,k \in G} \pi_i p_{ik}^{(n+1)} \tau_{ik} - \sum_{i \in G} \pi_i p_{ii}^{(n)} \tau_{ii}^+\\
& = 1 +  \tau^{(n+1)} - \sum_{i \in G} \pi_i p_{ii}^{(n)} \tau_{ii}^+.
\end{align*}

We therefore have the recurrence relation
\begin{equation}
\label{eq:minrecur}
\tau^{(n+1)} = \tau^{(n)} + \sum_{i \in G}\pi_i p_{ii}^{(n)} \tau_{ii}^+ - 1.
\end{equation}
In particular, we have
\begin{align}
\label{eq:m0}
\tau^{(0)} & = 0\\
\label{eq:m1}
\tau^{(1)} & = \sum_{i \in G}\pi_i \tau_{ii}^+ - 1\\
\label{eq:m2}
\tau^{(2)} & = \sum_{i \in G}\pi_i \tau_{ii}^+ \left( 1+p_{ii}^{(1)} \right) - 2\\
\label{eq:m3}
\tau^{(3)} & = \sum_{i \in G}\pi_i \tau_{ii}^+ \left(1+p_{ii}^{(1)} +p_{ii}^{(2)} \right) - 3.
\end{align}
We note that if $G$ has no self-loops then $p_{ii}^{(1)}=0$ for all $i \in G$.

Letting $n \to \infty$ in Eq.~\eqref{eq:minrecur}, (using Ces\`{a}ro means to get convergence in the case that random walks on $G$ are periodic) we obtain the identity
\begin{equation}
\label{eq:sumpi2mi}
\sum_{i \in G} \pi_i^2 \tau_{ii}^+ = 1.
\end{equation}
Eq.~\eqref{eq:sumpi2mi} can also be obtained by applying Kac's formula \cite[Corollary 2.24]{aldous2002reversible} to the random walk on $G \times G$.

\subsection{Time spent in different strategies}
\label{sec:timedifferent}

The probability that $i$ and $j$ employ different strategies at time $t$, under the neutral DB process with the uniform distribution $\mathbf{u}$ over initial placements of the mutant, can be calculated as
\begin{align*}
\Prob^\circ_\mathbf{u}[S_i(t) \neq S_j(t)] & = \frac{1}{N} \sum_{k \in G} \Prob^\CRW_{(i,j)} [T_\coal > t \wedge (X(t) = k \vee Y(t) = k)]\\
& = \frac{1}{N} \sum_{k \in G}  \Prob^\CRW_{(i,j)}[T_\coal > t, X(t) = k] \\
& \qquad + \frac{1}{N} \sum_{k \in G}  \Prob^\CRW_{(i,j)}[T_\coal > t, Y(t) = k]\\
& = \frac{2}{N} \Prob^\CRW_{(i,j)}[T_\coal > t].
\end{align*}

The total expected time that $i$ and $j$ spend as different strategies can be calculated as
\begin{align*}
\int_0^\infty \Prob^\circ_\mathbf{u}[S_i(t) \neq S_j(t)] \; dt & = \frac{2}{N} \int_0^\infty \Prob^\CRW_{(i,j)}[T_\coal > t] \; dt\\
& = \frac{2}{N} \E^\CRW_{(i,j)}[T_\coal]\\
& = \frac{\tau_{ij}}{N}.
\end{align*}

To connect to results described in the main text, we note that one unit of time in the continuous-time DB process corresponds, on expectation, to $N$ time-steps of the discrete-time DB process described in the main text.  Thus $\tau_{ij}$ is equal to the time that $i$ and $j$ employ different strategies in the neutral discrete-time DB process.  Let $\tau_\mathrm{abs}$ denote the expected time until absorption (mutant fixation or extinction) in the neutral discrete-time DB process. ($\tau_\mathrm{abs}$ is denoted $T$ in the main text.)  Then $\tau_\mathrm{abs} - \tau_{ij}$ is the expected time that $i$ and $j$ employ the same strategy, from mutant appearance until absorption, in the neutral discrete-time DB process.

\section{Conditions for success}
\label{sec:conditions}

\subsection{General case}
\label{sec:generalconditions}

Combining Eqs.~\eqref{eq:sfix} and \eqref{eq:sinmin}, we obtain the fixation probability of cooperators under weak selection:
\begin{equation}
\label{eq:fixtau}
\rho_\C = \frac{1}{N} +  
\frac{\delta}{2N}  \Big ( -c \tau^{(2)}
+ b \left( \tau^{(3)} - \tau^{(1)} \right) \Big)
+ \mathcal{O}(\delta^2).
\end{equation}
Cooperation is favored under weak selection, in the sense that $\rho_\C>1/N$ to first order in $\delta$, if and only if
\begin{equation}
\label{eq:successm}
 -c \tau^{(2)} + b \left( \tau^{(3)} - \tau^{(1)} \right) > 0.
\end{equation}
This is equivalent to Condition (1) and Eq.~(2) of the main text.  We show in Section \ref{sec:defector} that this is also the condition for $\rho_\D<1/N$.

Using Eqs.~\eqref{eq:m0}--\eqref{eq:m3}, we can rewrite Condition \eqref{eq:successm} as
\begin{equation}
\label{eq:DBcond}
-c \left(\sum_{i \in G} \pi_i \tau_{ii}^+ \left(1 + p_{ii}^{(1)} \right) - 2 \right) 
+ b \left( \sum_{i \in G} \pi_i \tau_{ii}^+ \left(p_{ii}^{(1)} + p_{ii}^{(2)} \right)- 2 \right) > 0.
\end{equation}

\subsection{Asymptotic formula}
\label{sec:pbar}

Here we derive Eq.~(3) of the main text, $(b/c)^*=1/\bar{p}$, which is an asymptotic expression that applies in a particular limit. Consider a graph $G$ with no self-loops.  Then $p_{ii}^{(1)}=0$ and we shorten $p_{ii}^{(2)}$ to $p_i$.  Substituting in Condition \eqref{eq:DBcond}, we obtain a critical benefit-cost ratio of
\begin{equation}
\label{eq:altcritbc}
\left( \frac{b}{c} \right)^* = \frac{\sum_{i \in G} \pi_i \tau_{ii}^+ - 2}{\sum_{i \in G} \pi_i \tau_{ii}^+p_i- 2 }.
\end{equation}

From Eq.~\eqref{eq:sumpi2mi} we have
\[
1  = \sum_{i \in G} \pi_i^2 \tau_{ii}^+ = \sum_{i \in G} \left( \frac{\pi_i}{p_i}\right)\pi_i \tau_{ii}^+ p_i 
\leq \max_{i \in G} \left( \frac{\pi_i}{p_i}\right) \sum_{i \in G} \pi_i \tau_{ii}^+ p_i.
\]
This implies that
\begin{equation}
\label{eq:sumbound}
\sum_{i \in G} \pi_i \tau_{ii}^+ p_i \geq \left( \max_{i \in G} \left( \frac{\pi_i}{p_i}\right)  \right)^{-1}.
\end{equation}

Now consider an infinite sequence of graphs $\{G_N\}_{N=N_0}^\infty$ for which \linebreak $\lim_{N \to \infty} \max_{i \in G_N} \left(\pi_i/p_i\right) = 0$.  (This is the ``locality property" discussed in the main text.)  For such a family, Eq.~\eqref{eq:sumbound} implies that the sums in the numerator and denominator of Eq.~\eqref{eq:altcritbc} diverge to infinity, resulting in  
\begin{equation}
\label{eq:asymcritbc}
\lim_{N \to \infty} \left( b/c \right)^* = \lim_{N \to \infty} 1/\bar{p},
\end{equation}
where $\bar{p} = \left( \sum_{i \in G_N} \pi_i \tau_{ii}^+p_i\right)/\left(\sum_{i \in G_N} \pi_i \tau_{ii}^+\right)$ is a weighted average of the $p_i$ with weights $\pi_i \tau_{ii}^+$.

\subsection{Weighted regular graphs}

$G$ is a \emph{weighted regular graph} if each vertex has the same distribution of weights to its neighbors.  This means that the set $\{w_{ij}\}_{j \in G}$ of outgoing weights is the same for each $i \in G$.  In this case, $\pi_i = 1/N$ for each $i \in G$, and Eq.~\eqref{eq:sumpi2mi} implies that $\tau_{ii}^+ =N$ for each $i \in G$.
The fixation probability of cooperators \eqref{eq:fixtau} becomes
\begin{equation}
\label{eq:weightedregular}
\rho_C = \frac{1}{N} + \frac{\delta}{2N} \Big( -c \left(N + Np^{(1)} - 2 \right) 
+ b \left( Np^{(1)} + Np^{(2)}- 2 \right) \Big)+ \mathcal{O}(\delta^2).
\end{equation}
Above, we have dropped the subscripts of $p_{ii}^{(1)}$ and $p_{ii}^{(2)}$ because they are the same at each vertex.

In the case that $G$ has no self-loops, $p^{(1)}=0$ and $p^{(2)}=1/\kappa$, where $\kappa = \left( \sum_{j \in G} p_{ij}^2 \right)^{-1}$ is the Simpson degree of $G$ \cite{allen2013spatial,allen2014games}.   The fixation probability \eqref{eq:weightedregular} simplifies to
\[
\rho_C = \frac{1}{N} + \frac{\delta}{2N} \big( -c \left(N - 2 \right) 
+ b \left( N/\kappa- 2 \right) \big)+ \mathcal{O}(\delta^2),
\]
and the critical benefit-cost ratio is
\begin{equation}
\label{eq:bckappa}
\left( \frac{b}{c} \right)^* = \frac{N - 2}{N/\kappa - 2 }.
\end{equation}
Eq.~\eqref{eq:bckappa} generalizes previous results for graphs with transitive symmetry \cite{Taylor,allen2014games} and for unweighted regular graphs \cite{chen2013sharp}.  In the unweighted case, the Simpson degree $\kappa$ is equal to the topological degree $k$.

\section{Starting from a single defector}
\label{sec:defector}

Suppose now that we start from a single defector (again placed uniformly at random). The game matrix can be written from the defector's point of view as
\begin{equation}
\label{eq:DPD}
\bordermatrix{
& \D & \C\cr
\D & 0 & b\cr
\C & -c & b-c }.
\end{equation}

From Eq.~\eqref{eq:D's} we see that $D'(\vs)$---the first-order term of the expected instantaneous rate of degree-weighted frequency change---is unaffected by the addition of a constant to the payoff matrix.  Thus we can, without affecting weak-selection fixation probability, add $-b+c$ to each entry of the matrix \eqref{eq:DPD}, resulting in the payoff matrix
\[
\bordermatrix{
& \D & \C\cr
\D & -b+c & c \cr
\C & -b & 0 }.
\]
This is the original payoff matrix \eqref{eq:PD} with the roles of C and D reversed and with $b$ and $c$ replaced by their opposites.  The fixation probability $\rho_\D$ is therefore obtained by replacing $b$ and $c$ by their opposites in Eq.~\eqref{eq:fixtau}:
\[
\rho_\D = \frac{1}{N} +  
\frac{\delta}{2N}  \Big ( c \tau^{(2)}
- b \left( \tau^{(3)} - \tau^{(1)} \right) \Big)
+ \mathcal{O}(\delta^2).
\]
In particular, this shows that Condition \eqref{eq:successm} implies $\rho_\C>1/N>\rho_\D$, and the opposite inequality in Condition \eqref{eq:successm} implies $\rho_\C<1/N<\rho_\D$.

\section{Arbitrary $2 \times 2$ games}
\label{sec:sigma}

We now consider arbitrary $2 \times 2$ matrix games of the form \eqref{eq:game}.  The general condition for success follows from the Structure Coefficient Theorem \cite{Corina}, which states that the condition for A to be favored over B (in the sense $\rho_\mathrm{A} > \rho_\mathrm{B}$ under weak selection) takes the form $\sigma a + b > c + \sigma d$ for some structure coefficient $\sigma$ that is independent of the game. Because of this independence, we can obtain $\sigma$ from our analysis of the simplified Prisoner's Dilemma \eqref{eq:PD}:
\begin{equation}
\sigma = \frac{(b/c)^*+1}{(b/c)^*-1}.
\end{equation}
Combining with Condition \eqref{eq:successm}, we obtain
\begin{equation}
\label{eq:sigma}
\sigma = \frac{-\tau^{(1)} + \tau^{(2)} + \tau^{(3)}}{\tau^{(1)} + \tau^{(2)} - \tau^{(3)}},
\end{equation}
as stated in the main text.

We note that the condition $\sigma a + b > c + \sigma d$ tells us that $\rho_\mathrm{A} > \rho_\mathrm{B}$, but does not, in general, tell us how these fixation probabilities compare with the neutral fixation probability $1/N$  \cite{TaylorFixProb}.  However, for games that satisfy $a+d=b+c$ (a property known as ``equal gains from switching") $\sigma a + b > c + \sigma d$ implies $\rho_\mathrm{A} > 1/N > \rho_\mathrm{B}$.

\section{Bounds on $(b/c)^*$ and $\sigma$}
\label{sec:sigmapos}

Here we prove the bounds $|(b/c)^*|>1$ and $\sigma>0$.  These will follow from the inequalities $\tau^{(1)} + \tau^{(2)} \geq \tau^{(3)}$ and $\tau^{(2)} + \tau^{(3)} \geq \tau^{(1)}$.   We begin with a lemma:

\begin{lemma}
\label{lem:piineq}
Let $G$ be a connected weighted graph (undirected, possibly with self-loops).  
For each vertex $i$ of  $G$,
\begin{equation}
\label{eq:piineq}
1 + p_{ii}^{(2)} \geq 4\pi_i.
\end{equation}
Equality occurs if and only if all edges not adjcacent to $i$ have weight zero.
\end{lemma}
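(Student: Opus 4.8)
The plan is to reduce the inequality \eqref{eq:piineq} to the elementary fact $(w_i - A)^2 \geq 0$ through a single application of Cauchy--Schwarz followed by one AM--GM step, where $A$ is the total weighted degree of the neighbours of $i$. The equality case would then be read off by checking when each of these two inequalities is tight. Throughout I would work directly with the edge weights rather than with the abstract random-walk quantities, so that the combinatorial content becomes visible.

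First I would rewrite the return probability. Since $G$ is undirected, $w_{ik}=w_{ki}$, so $p_{ii}^{(2)} = \sum_k p_{ik} p_{ki} = \sum_k \frac{w_{ik}}{w_i}\frac{w_{ki}}{w_k} = \frac{1}{w_i}\sum_{k \in N(i)} \frac{w_{ik}^2}{w_k}$, where $N(i) = \{k : w_{ik} > 0\}$. Applying the Engel (Titu) form of Cauchy--Schwarz to the nonzero terms gives $\sum_{k \in N(i)} \frac{w_{ik}^2}{w_k} \geq \frac{(\sum_{k \in N(i)} w_{ik})^2}{\sum_{k \in N(i)} w_k} = \frac{w_i^2}{A}$, with $A := \sum_{k \in N(i)} w_k$, hence $p_{ii}^{(2)} \geq w_i/A$. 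I would then install two crude weight bounds: because each neighbour satisfies $w_k \geq w_{ki} = w_{ik}$, summing yields $A \geq \sum_k w_{ik} = w_i$; and because the degree of $i$ and those of its neighbours are disjoint nonnegative contributions to the total, $W = \sum_j w_j \geq w_i + A$. Multiplying the two resulting lower bounds (all factors positive) gives $\frac{W}{w_i}\bigl(1 + p_{ii}^{(2)}\bigr) \geq \frac{w_i + A}{w_i}\cdot\frac{w_i + A}{A} = \frac{(w_i + A)^2}{w_i A} \geq 4$, the last inequality being exactly $(w_i - A)^2 \geq 0$. Dividing through by $W/w_i$ delivers $1 + p_{ii}^{(2)} \geq 4 w_i / W = 4\pi_i$.

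For the equality case I would chase the three inequalities used. The AM--GM step is tight iff $A = w_i$; combined with $w_k \geq w_{ik}$ this forces $w_k = w_{ik}$ for every $k \in N(i)$, i.e.\ each neighbour of $i$ spends all of its weight on its single edge to $i$. The bound $W \geq w_i + A$ is tight iff no weight sits outside $i$ and $N(i)$, which by connectedness means $N(i) = G \setminus \{i\}$ with no edges among the neighbours. Together these say precisely that every edge is incident to $i$, that is, all edges not adjacent to $i$ have weight zero; and the Cauchy--Schwarz equality condition that $w_{ik}/w_k$ be constant is then automatic, since each such ratio equals $1$.

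The step I expect to be the main obstacle is the treatment of the diagonal term, namely self-loops at $i$. When $w_{ii} > 0$ the vertex $i$ is its own neighbour, the term $k = i$ contributes $w_{ii}^2/w_i^2$ to $p_{ii}^{(2)}$, and the clean bound $W \geq w_i + A$ would double-count $w_i$, so the argument above does not transcribe verbatim. To handle this I would isolate the $k=i$ term and complete the square, grouping the self-loop return $p_{ii}^{(1)} = w_{ii}/w_i$ together with the diagonal contribution to write the relevant combination as $\bigl(1 + \frac{w_{ii}}{w_i}\bigr)^2 + \frac{1}{w_i}\sum_{k \neq i}\frac{w_{ik}^2}{w_k}$, and then rerun the Cauchy--Schwarz and AM--GM steps on the off-diagonal part against the bound $W \geq w_i + \sum_{k \in N(i),\, k \neq i} w_k$. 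Arranging this bookkeeping so that the square closes cleanly and the equality characterisation is preserved is the delicate part of the proof.
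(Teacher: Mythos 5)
Your main argument is correct whenever there is no self-loop at $i$, and it takes a genuinely different route from the paper. The paper's proof is variational: it freezes the weights adjacent to $i$, lets the non-adjacent weights vary (even allowing them to become directed), and shows by a critical-point analysis over a simplex that the function $F = w_i + \sum_j w_{ij}^2/w_j^{\mathrm{out}} - 4w_i^2/W$ has minimum value $0$, attained exactly when every non-adjacent weight vanishes. Your chain --- Engel-form Cauchy--Schwarz giving $p_{ii}^{(2)} \geq w_i/A$, the two monotone bounds $A \geq w_i$ and $W \geq w_i + A$, and finally $(w_i - A)^2 \geq 0$ --- replaces that whole optimization with three one-line inequalities whose tightness conditions are transparent, and your equality analysis (each neighbour forced to satisfy $w_k = w_{ik}$, no weight outside $\{i\} \cup N(i)$) recovers exactly the paper's characterization. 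What your approach buys is brevity and an explicit quantitative mechanism; what the paper's buys is, in principle, a framework for perturbing the non-adjacent weights, though in practice it is longer and no more general. Self-loops at vertices $k \neq i$ are genuinely harmless in your setup, since $w_k \geq w_{ik}$ still holds.

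On the self-loop-at-$i$ obstacle you flag: your instinct that it is the delicate point is right, but your proposed patch does not prove the stated lemma, because
\begin{equation*}
\left(1 + \frac{w_{ii}}{w_i}\right)^2 + \frac{1}{w_i}\sum_{k \neq i}\frac{w_{ik}^2}{w_k} \;=\; 1 + 2p_{ii}^{(1)} + p_{ii}^{(2)},
\end{equation*}
which is strictly larger than $1 + p_{ii}^{(2)}$ when $w_{ii}>0$. In fact no patch can work: with a self-loop at $i$ the inequality \eqref{eq:piineq} is false. Take two vertices with $w_{12} = 1$ and $w_{11} = 1$; then $w_1 = 2$, $W = 3$, $\pi_1 = 2/3$, and $p_{11}^{(2)} = (1/2)^2 + (1/2)(1) = 3/4$, so $1 + p_{11}^{(2)} = 7/4 < 8/3 = 4\pi_1$, even though every edge is adjacent to vertex $1$ (the lemma would assert equality here). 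The paper's own proof silently breaks at the same place: its identity \eqref{eq:Wsum2}, and the ensuing claim that all-edges-adjacent-to-$i$ forces $w_i = W/2$, both require $w_{ii} = 0$. So your clean argument covers precisely the case the paper's proof actually establishes, and the ``delicate part'' you anticipated is a defect of the lemma's statement rather than of your proof. Notably, the combination $1 + 2p_{ii}^{(1)} + p_{ii}^{(2)} \geq 4\pi_i$ that your patch targets is exactly what part (b) of Theorem \ref{thm:tauineq} needs, so it is the correct repair for that downstream use, though part (c) genuinely fails with self-loops.
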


\begin{proof}  Let vertex $i \in G$ be arbitrary and fixed. Our proof uses a variational approach: we hold constant the weights of the edges $w_{ij}$ adjacent to $i$, but allow the weights of edges not adjacent to $i$ to vary.  We will also allow the edges not adjacent to $i$ to be directed---that is, $w_{jk}$ may differ from $w_{kj}$ for $j,k\neq i$---while maintaining that the edges adjacent to $i$ be undirected ($w_{ij}=w_{ji}$ for all $j$).

In this generalized setting, we denote in-degrees and out-degrees as follows:
\[
w_j^\mathrm{in} = \sum_{k \in G} w_{kj}, \qquad
w_j^\mathrm{out} = \sum_{k \in G} w_{jk}.
\]
The total edge weight of $G$ is again denoted $W$:
\begin{equation}
\label{eq:Wsum}
W = \sum_{j,k \in G} w_{jk} = \sum_{j \in G} w_j^\mathrm{in} = \sum_{j \in G} w_j^\mathrm{out}.
\end{equation}
For vertex $i$ we have $w_i^\mathrm{in} = w_i^\mathrm{out}=w_i$.  The $\pi_i$ in \eqref{eq:piineq} is to be understood as $w_i/W$. It is useful to observe the identity
\begin{equation}
\label{eq:Wsum2}
W=2w_i + 2 \sum_{h,k \neq i} w_{hk}.
\end{equation}

We define the function
\begin{equation}
\label{eq:Fdef}
F\left( (w_{h k})_{\substack{h,k \neq i\\h \neq k}} \right) = 
w_i + \sum_{j \in G} \frac{w_{ij}^2}{w_j^\mathrm{out}} - 4 \frac{w_i^2}{W}.
\end{equation}
Note that if all edges not adjacent to $i$ have weight zero (i.e.~$w_{hk}=0$ for all $h, k \neq i$), then $w_j^\mathrm{out}=w_{ji}=w_{ij}$ for each $j \neq i$ and $w_i=W/2$, whereupon substituting in \eqref{eq:Fdef} gives $F=0$.  We will prove that this is the minimum value of $F$.  Dividing by $w_i$ will then yield
\[
1+\sum_{j \in G} \frac{w_{ij}^2}{w_i w_j^\mathrm{out}} - 4 \frac{w_i}{W} \geq 0,
\]
which is equivalent to the desired inequality \eqref{eq:piineq}.

To prove that zero is the minimum value of $F$, we first observe from \eqref{eq:Wsum2} that if $\sum_{h, k \neq i} w_{h k} \geq w_i$ then $w_i \leq W/4$.  In this case, writing  the first and third terms of $F$ as $w_i(1-4w_i/W)$, we see that $F>0$.  Therefore, to minimize $F$, it suffices to restrict the domain of $F$ to the simplex 
\[
\Delta = \left\{(w_{h k})_{h, k \neq i} \; \Bigg| \; w_{h k} \geq 0, \; \sum_{h, k \neq i} w_{h k} < w_i \right\}.
\]

We now look for critical points of $F$ on the interior of $\Delta$, as well as critical points of $F$ restricted to the one of the faces of $\Delta$ where one or more of the $w_{h k}$ are zero.  Such critical points of $F$ have the property that for all $h, k \neq i$, either $w_{h k} = 0$ or $\frac{\partial F}{\partial w_{h k}} = 0$.
In the latter case, we have
\begin{equation}
\label{eq:partialF}
0 = \frac{\partial F}{\partial w_{h k}}  =  - \frac{w_{ih}^2}{\left( w_h^\mathrm{out} \right)^2} 
 + 4 \frac{w_i^2}{W^2},
\end{equation}
which implies that 
\begin{equation}
\label{eq:critpts}
\frac{w_{ih}}{w_h^\mathrm{out}} = 2\frac{w_i}{W}.
\end{equation}
We note in particular that Eq.~\eqref{eq:critpts} is independent of $k$.  This implies a stronger characterization of potential minimizing points of $F$: They must satisfy the property that for all $h \neq i$, either Eq.~\eqref{eq:critpts} holds or else $w_{h k} = 0$ for all $k \neq i$.  Equivalently, for potential minimizing points of $F$, the vertices $h$ other than $i$ can be partitioned into two disjoint subsets, labeled $V_1$ and $V_2$, such that $w_h^\mathrm{out} = w_{ih}$ for $h \in V_1$ and $w_h^\mathrm{out} =  (W/2w_i)\,  w_{ih}$ for $h \in V_2$.

We now apply Eq.~\eqref{eq:Wsum}:
\begin{align*}
W & = w_i + \sum_{h \in V_1} w_h^\mathrm{out} + \sum_{h \in V_2} w_h^\mathrm{out}\\
& = w_i + \sum_{h \in V_1} w_{ih} + \frac{W}{2w_i} \sum_{h \in V_2} w_{ih}\\
& = 2w_i + \left(\frac{W}{2w_i} - 1 \right) \sum_{h \in V_2} w_{ih}.
\end{align*}
Rearranging algebraically, we obtain
\begin{equation*}
\left( 2w_i - \sum_{h \in V_2} w_{ih} \right) (W-2w_i) = 0.
\end{equation*}
The first factor on the left-hand side is necessarily positive; therefore all potential minimizing points of $F$ satisfy $w_i = W/2$. In this case, all edges not adjacent to $i$ have weight zero, which we have already shown implies that $F=0$.  Thus $F \geq 0$ with equality if and only if all edges not adjacent to $i$ have weight zero. Dividing Eq.~\eqref{eq:Fdef} by $w_i$ completes the proof.
\end{proof}

Positivity of the numerator and the denominator of $\sigma$ in Eq.~\eqref{eq:sigma} for $N \geq 3$ follows from the above lemma:

\begin{theorem}
\label{thm:tauineq}
For any connected weighted graph $G$, the meeting times $\tau^{(1)}$, $\tau^{(2)}$, and $\tau^{(3)}$ satisfy
\renewcommand{\labelenumi}{(\alph{enumi})}
\begin{enumerate}
\item $\tau^{(1)} + \tau^{(2)} \geq \tau^{(3)}$,
\item $\tau^{(2)} + \tau^{(3)} \geq \tau^{(1)}$,
\item $\tau^{(1)} + \tau^{(3)} \geq \tau^{(2)} + 2$.
\end{enumerate}
In (a) and (b), equality occurs if and only if $G$ has two vertices and no self-loops.  In (c), equality occurs if and only if $G$ has two vertices.
\end{theorem}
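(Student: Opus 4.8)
The plan is to reduce each inequality to the statement that a single sum $\sum_{i \in G} \pi_i \tau_{ii}^+\, g_i$ is nonnegative, where the weights $\pi_i \tau_{ii}^+$ are manifestly positive and the integrand $g_i$ is controlled vertex by vertex---by the trivial bound $p_{ii}^{(2)} \le 1$ for part (a), and by Lemma~\ref{lem:piineq} for parts (b) and (c).

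First I would substitute the closed forms \eqref{eq:m0}--\eqref{eq:m3}. Abbreviating $S_r = \sum_{i \in G} \pi_i \tau_{ii}^+ p_{ii}^{(r)}$, so that $\tau^{(1)} = S_0 - 1$, $\tau^{(2)} = S_0 + S_1 - 2$, and $\tau^{(3)} = S_0 + S_1 + S_2 - 3$, the three target quantities collapse to
\begin{align*}
\tau^{(1)} + \tau^{(2)} - \tau^{(3)} &= S_0 - S_2, \\
\tau^{(2)} + \tau^{(3)} - \tau^{(1)} &= S_0 + 2S_1 + S_2 - 4, \\
\tau^{(1)} + \tau^{(3)} - \tau^{(2)} - 2 &= S_0 + S_2 - 4 .
\end{align*}
The decisive step is to absorb the stray constant using the identity \eqref{eq:sumpi2mi}, $\sum_{i} \pi_i^2 \tau_{ii}^+ = 1$, in the form $4 = \sum_i \pi_i \tau_{ii}^+ (4\pi_i)$. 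Since $p_{ii}^{(0)} = 1$, each right-hand side then becomes $\sum_i \pi_i \tau_{ii}^+ g_i$ with $g_i = 1 - p_{ii}^{(2)}$ for (a), $g_i = 1 + 2p_{ii}^{(1)} + p_{ii}^{(2)} - 4\pi_i$ for (b), and $g_i = 1 + p_{ii}^{(2)} - 4\pi_i$ for (c).

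Nonnegativity now follows termwise. For (a), $g_i = 1 - p_{ii}^{(2)} \ge 0$ because $p_{ii}^{(2)}$ is a return probability; Lemma~\ref{lem:piineq} is not needed. For (c), $g_i \ge 0$ is exactly the inequality of Lemma~\ref{lem:piineq}. For (b), I would split $g_i = (1 + p_{ii}^{(2)} - 4\pi_i) + 2p_{ii}^{(1)}$ and bound the first bracket by Lemma~\ref{lem:piineq} and the second by $p_{ii}^{(1)} \ge 0$. The weights satisfy $\pi_i > 0$ (connectedness) and $\tau_{ii}^+ \ge 1$ (immediate from \eqref{eq:miirecur}), so every sum is a positive combination of nonnegative terms, giving the three inequalities.

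For the equality clauses I would use that, with strictly positive weights, $\sum_i \pi_i \tau_{ii}^+ g_i = 0$ forces $g_i = 0$ at every vertex. In (a) this means $p_{ii}^{(2)} = 1$ for all $i$; unwinding $\sum_j p_{ij} p_{ji} = 1$ shows every neighbour returns deterministically, which for a connected graph leaves only the two-vertex graph with no self-loop. In (b), $g_i = 0$ forces simultaneously $p_{ii}^{(1)} = 0$ and equality in Lemma~\ref{lem:piineq}, whose equality clause applied at every vertex again isolates the two-vertex graph without self-loops. In (c) only the Lemma~\ref{lem:piineq} equalities are demanded, which is the source of the weaker condition permitting self-loops. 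I expect the main obstacle to be precisely this equality analysis---and in particular reconciling the self-loop bookkeeping so that (c) admits self-loops while (a) and (b) do not; the inequalities themselves are a short consequence of Lemma~\ref{lem:piineq} and the identity \eqref{eq:sumpi2mi} once the constant is absorbed.
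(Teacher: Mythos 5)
Your proposal is correct and follows essentially the same route as the paper's proof: it substitutes the closed forms \eqref{eq:m1}--\eqref{eq:m3}, absorbs the stray constant via \eqref{eq:sumpi2mi} in the form $4=\sum_{i\in G}\pi_i\tau_{ii}^+(4\pi_i)$, and bounds termwise using $p_{ii}^{(2)}\leq 1$ for (a) and Lemma~\ref{lem:piineq} for (b) and (c), with the same equality analysis. The only difference is presentational---you organize all three parts into a single nonnegative sum $\sum_{i\in G}\pi_i\tau_{ii}^+ g_i$ where the paper chains the inequalities directly---and the self-loop bookkeeping you flag as the main obstacle in the equality case of (c) is treated no more explicitly in the paper, which likewise reads it off from the equality clause of Lemma~\ref{lem:piineq}.
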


\begin{proof}
For (a) we use Eq.~\eqref{eq:m1}--\eqref{eq:m3}:
\begin{align*}
\tau^{(3)} - \tau^{(1)}  & = \sum_{i \in G}\pi_i \tau_{ii}^+ \left( p_{ii}^{(1)} +p_{ii}^{(2)} \right) - 2\\
& \leq \sum_{i \in G}\pi_i \tau_{ii}^+ \left( p_{ii}^{(1)} +1 \right) - 2\\
& = \tau^{(2)}.
\end{align*}
Equality occurs if and only if $p_{ii}^{(2)}=1$ for all $i$, which is only possible when $G$ has two vertices and no self-loops.

For (b) and (c), we combine Eqs.~\eqref{eq:m3} and \eqref{eq:sumpi2mi} with Lemma \ref{lem:piineq}:
\begin{align}
\nonumber
\tau^{(3)} & = \sum_{i \in G}\pi_i \tau_{ii}^+\left(1+ p_{ii}^{(1)} + p_{ii}^{(2)} \right) - 3 \\
\nonumber
& \geq  \sum_{i \in G}\pi_i \tau_{ii}^+ \left( p_{ii}^{(1)} + 4 \pi_i\right)- 3\\ 
\label{eq:t3ineq}
& = \sum_{i \in G}\pi_i \tau_{ii}^+ p_{ii}^{(1)} +1.
\end{align}
Equality in \eqref{eq:t3ineq} attains if and only if when $1+p_{ii}^{(2)}=4\pi_i$ for all vertices $i$ of $G$.  By Lemma \ref{lem:piineq}, this means that each vertex $i$ has the property that all edges not adjacent to $i$ have weight 0.  The only connected graphs with this property are those of size two.

Inequality (c) is now obtained by using Eqs.~\eqref{eq:m1} and \eqref{eq:m2} to rewrite the last line of \eqref{eq:t3ineq} as $\tau^{(2)}-\tau^{(1)}+2$.  For (b), we rewrite the last line of Eq.~\eqref{eq:t3ineq} as $\tau^{(1)}-\tau^{(2)}+2 \sum_{i \in G}\pi_i \tau_{ii}^+ p_{ii}^{(1)}$, which is greater than or equal to $\tau^{(1)}-\tau^{(2)}$ with equality if and only if $G$ has no self-loops.
\end{proof}

We now turn to the bounds on $(b/c)^*$ and $\sigma$.  

\begin{corollary}
For any graph of size $N \geq 3$, we have the bounds $\sigma > 0$ and $|(b/c)^*|>1$.  For graphs of size $N=2$, $\rho_A = \rho_B = 1/2$ regardless of the game \eqref{eq:game}; thus both $(b/c)^*$ and $\sigma$ are undefined.
\end{corollary}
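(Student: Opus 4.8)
The plan is to get both bounds for $N \geq 3$ essentially for free from Theorem~\ref{thm:tauineq}, and then to handle the two-vertex case separately by a direct dynamical argument, since that is precisely where the quantities degenerate. For $N \geq 3$ I would simply read off the signs of the numerator and denominator of $\sigma$ in Eq.~\eqref{eq:sigma}. Part (b) of Theorem~\ref{thm:tauineq} gives $-\tau^{(1)}+\tau^{(2)}+\tau^{(3)} \geq 0$ and part (a) gives $\tau^{(1)}+\tau^{(2)}-\tau^{(3)} \geq 0$; the key point is that both equality cases are confined to two-vertex graphs with no self-loops, so for $N \geq 3$ both expressions are \emph{strictly} positive. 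Hence $\sigma > 0$ immediately.

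For $|(b/c)^*|>1$ I would work directly from the critical ratio $(b/c)^* = \tau^{(2)}/(\tau^{(3)}-\tau^{(1)})$ implied by Condition~\eqref{eq:successm}, splitting on the sign of $\tau^{(3)}-\tau^{(1)}$. If $\tau^{(3)}>\tau^{(1)}$ then $(b/c)^*>0$, and $(b/c)^*>1$ is equivalent to $\tau^{(1)}+\tau^{(2)}>\tau^{(3)}$, i.e.\ the strict form of (a). If $\tau^{(3)}<\tau^{(1)}$ then $(b/c)^*<0$, and $(b/c)^*<-1$ is equivalent to $\tau^{(2)}+\tau^{(3)}>\tau^{(1)}$, the strict form of (b). The borderline $\tau^{(3)}=\tau^{(1)}$ (e.g.\ the star) makes $(b/c)^*$ infinite, in which case $|(b/c)^*|>1$ holds by convention while $\sigma=1$. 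This case analysis is consistent with the two statements being logically equivalent through the M\"obius relation $\sigma=((b/c)^*+1)/((b/c)^*-1)$ of Section~\ref{sec:sigma}, which sends $\{\sigma>0\}$ exactly onto $\{|(b/c)^*|>1\}$; I could instead invoke this relation to deduce one bound from the other, but the direct computation has the advantage of being self-contained and of cleanly absorbing the infinite endpoint.

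Finally, for $N=2$ (the connected two-vertex graph without self-loops, which is precisely the equality case excluded above) the walk alternates deterministically, so $\tau^{(2)}=0$ and $\tau^{(1)}=\tau^{(3)}$; both $\sigma$ and $(b/c)^*$ then take the indeterminate form $0/0$ and are undefined. To establish $\rho_A=\rho_B=1/2$ I would step outside the weak-selection expansion and argue from the dynamics directly: when either vertex is chosen to die, its \emph{unique} neighbor reproduces into it with probability $1$, independently of reproductive rates, so selection is completely neutralized. Starting from one $A$ and one $B$—the common initial condition defining both $\rho_A$ and $\rho_B$—the process reaches all-$A$ exactly when the $B$-vertex is replaced first and all-$B$ exactly when the $A$-vertex is, and these events have equal probability $1/2$ by the symmetry of the uniform death step. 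Hence $\rho_A=\rho_B=1/2$ exactly, for every game.

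The analytic heavy lifting is all in Theorem~\ref{thm:tauineq}, so the $N\geq 3$ claims reduce to a short sign check; the part needing genuine care is the $N=2$ boundary. The obstacle there is conceptual rather than computational: one must recognize that the $0/0$ form signals that selection has no effect at all (each vertex has a single reproducing neighbor), which is exactly why the weak-selection formula breaks down and why the exact value $1/2$ has to be argued by the symmetry of the death step rather than read off from the coalescence times.
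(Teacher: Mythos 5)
Your proposal is correct, and its skeleton is the same as the paper's: for $N \geq 3$ you read $\sigma>0$ off the strict versions of inequalities (a) and (b) of Theorem~\ref{thm:tauineq} applied to Eq.~\eqref{eq:sigma}, exactly as the paper does, and for $N=2$ you give the same dynamical argument (the dying vertex's unique neighbor reproduces with probability $1$ regardless of payoffs, so fixation is decided by the uniform death step). The one genuine divergence is the bound $|(b/c)^*|>1$: the paper deduces it from $\sigma>0$ via the M\"obius relation $(b/c)^* = (\sigma+1)/(\sigma-1)$, whereas you argue directly from $(b/c)^* = \tau^{(2)}/\left(\tau^{(3)}-\tau^{(1)}\right)$ with a case split on the sign of $\tau^{(3)}-\tau^{(1)}$. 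Your version is slightly longer but has two small advantages: it makes explicit the borderline case $\tau^{(3)}=\tau^{(1)}$ (infinite $(b/c)^*$, $\sigma=1$), which the paper's one-line M\"obius argument passes over silently, and it exhibits directly which strict inequality of Theorem~\ref{thm:tauineq} is responsible in each sign regime. Note also that your implicit use of $\tau^{(2)}>0$ in the case split is harmless, since for $N\geq 3$ it follows from strict inequality (a) even when $\tau^{(3)}=\tau^{(1)}$. Finally, your explicit restriction of the $N=2$ case to the two-vertex graph \emph{without} self-loops is actually more careful than the paper's phrasing: the paper's first-death-event argument tacitly assumes no self-loops (with a self-loop, the dying vertex can be replaced by its own offspring and selection re-enters), and the no-self-loop case is indeed the equality case singled out by Theorem~\ref{thm:tauineq}(a)--(b).
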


\begin{proof}
We begin with the case $N \geq 3$, for which the inequalities in Theorem \ref{thm:tauineq} are strict.  Inequalities (b) and (a) assert that the numerator and denominator, respectively, of Eq.~\eqref{eq:sigma} for $\sigma$ are positive, and therefore $\sigma > 0$.  The bound $|(b/c)^*|>1$ can be obtained by writing
\[
\left( \frac{b}{c} \right)^* = \frac{\sigma + 1}{\sigma - 1},
\]
and noting that the right-hand side has absolute value greater than 1 for $\sigma >0$.

For the case $N=2$, we observe that, from an initial state with one A and one B, fixation is determined by the first death event.  Whichever type is chosen for death, the other type becomes fixed, regardless of the game.  Thus $(b/c)^*$ and $\sigma$ are undefined.
\end{proof}

Note that $(b/c)*$ can come arbitrarily close to 1 (equivalently, $\sigma$ can be arbitrarily large), as in Fig.~\ref{fig:promote}c of the main text.  It is not immediately clear whether $(b/c)*$ can come arbitrarily close to -1 (equivalently, whether $\sigma$ can be arbitrarily close to 0).

\section{Arbitrary mutation rates}
\label{sec:mutation}

Here we relax the assumption that offspring always inherit the type of the parent.  We introduce mutation of arbitrary probability $0 \leq u \leq 1$ per reproduction.  With probability $1-u$, a new offspring inherits the type of the parent.  Otherwise a mutation occurs, which is equally likely to result in either type.  So, for example, the offspring of a type A individual has probability $1-u$ of inheriting type A, probability $u/2$ of mutating but remaining type A, and probability $u/2$ of mutating to type B.

For $u>0$ there is a unique stationary distribution over states of the evolutionary process, which we call the \emph{mutation-selection stationary distribution} or \emph{MSS distribution} \cite{allen2014measures,allen2014games}.  We use $\E_\mathrm{MSS}$ to denote the expectation of a quantity under the MSS distribution, and $\E_\mathrm{MSS}^\circ$ to denote the same expectation under neutral drift ($\delta=0$).  

Following Tarnita and Taylor \cite{tarnita2014measures}, we say that type A is favored by selection if $\E_\mathrm{MSS}[\hat{x}]>\frac{1}{2}$; that is, if A has greater degree-weighted abundance than B in the MSS distribution.  Eq.~(B.3) of Tarnita and Taylor \cite{tarnita2014measures} implies that for weak selection and DB updating, A is favored if and only if 
\begin{equation}
\label{eq:tarnitataylor}
\E_\mathrm{MSS}^\circ[D']>0.
\end{equation}

Evaluating $\E_\mathrm{MSS}^\circ[D']$ requires analyzing the assortment of types under the MSS distribution.  This can be accomplished using the method of identity-by-descent \cite{malecot1948IBD,RoussetBook,Taylor}.  Two individuals are \emph{identical by descent} (IBD) if no mutation separates them from their common ancestor.    We let $q_{ij}$ denote the stationary probability that the occupants of $i$ and $j$ are IBD to each other.  These IBD probabilities can be rigorously defined using the notion of the IBD-enriched Markov chain \cite{allen2014games}. IBD probabilities on an arbitrary weighted, connected graph $G$ can be obtained as the unique solution to the system of equations \cite{allen2014games}:
\begin{equation}
\label{eq:qsystem}
q_{ij} = \begin{cases} 
1 & i=j\\
\frac{1-u}{2} \sum_{k \in G} (p_{ik} q_{jk} + p_{jk} q_{ik}) & i \neq j. 
\end{cases}
\end{equation}
By generalizing Lemma 3 of Allen \& Nowak \cite{allen2014games}, one can use IBD probabilities to calculate assortment under the neutral MSS distribution:
\begin{equation}
\label{eq:sisjq}
\E^\circ_\mathrm{MSS} \left[ s_i s_j  \right ]  =  \frac{1+q_{ij}}{4}.
\end{equation}

For the donation game \eqref{eq:PD}, combining Eqs.~\eqref{eq:D'sPD}, \eqref{eq:tarnitataylor}, and \eqref{eq:sisjq}, we obtain that cooperation is favored under weak selection if and only if 
\begin{equation}
\label{eq:qcondition}
-c \left( 1 - q^{(2)} \right) + b \left(q^{(1)} - q^{(3)} \right) > 0.
\end{equation}
Above, according to our convention, we have set $q^{(n)} = \sum_{i,j \in G} \pi_i p_{ij}^{(n)} q_{ij}$.  \

For an arbitrary game of the form \eqref{eq:game}, the Structure Coefficient Theorem \cite{Corina} implies that A is favored under weak selection if and only if $\sigma a + b > c + \sigma d$ where 
\[
\sigma = \frac{1+q^{(1)}-q^{(2)}-q^{(3)}}{1-q^{(1)}-q^{(2)} + q^{(3)}}.
\]
This result generalizes Theorem 15 of Allen \& Nowak \cite{allen2014games} to arbitrary weighted graphs.

To connect Condition \eqref{eq:qcondition} to our main result \eqref{eq:successm}, we apply an established connection between coalescence times and identity-by-descent probabilities \cite{slatkin1991inbreeding,van2015social}.  Since each step in the discrete-time coalescing random walk corresponds to a replacement, and mutations occur with probability $u$ per replacement, the IBD probability $q_{ij}$ has the low-mutation expansion
\begin{equation}
\label{eq:qtau}
q_{ij} = \tilde{\E}^\mathrm{CRW}_{(i,j)} \left [(1-u)^{T_\mathrm{coal}} \right] = 1 - u\tau_{ij} + \mathcal{O}(u^2) \qquad (u \to 0).
\end{equation}
Eq.~\eqref{eq:qtau} can also be obtained directly by comparing Eq.~\eqref{eq:qsystem} for $q_{ij}$ with Eq.~\eqref{eq:mrecur} for $\tau_{ij}$.  The equivalence of Conditions \eqref{eq:successm} and \eqref{eq:qcondition} in the low-mutation limit ($u \to 0$) follows directly from Eq.~\eqref{eq:qtau}.

\section{Variations on the model}
\label{sec:variations}

Here we consider three variations on the model: using accumulated rather than averaged payoffs, using different interaction and replacement graphs, and using Birth-Death rather than Death-Birth updating.  In each case we obtain the exact condition for success in terms of coalescence times.  Although we do not explicitly combine these variations (e.g.~Birth-Death updating with accumulated payoffs), such combinations can be analyzed using straightforward combinations of the methods described here.

\subsection{Accumulated payoffs}
\label{sec:accum}

Accumulated payoffs means that the payoffs to vertex $i$ from its neighbors are multiplied by the corresponding edge weights and summed, without normalization.  For the simplified Prisoners' Dilemma \eqref{eq:PD}, the accumulated payoff to vertex $i$ in state $\vs$ is given by
\begin{align*}
f_i(\vs) & = w_i \left(-c s_i + b s_i^{(1)} \right)\\
& = W\pi_i \left(-c s_i + b s_i^{(1)} \right).
\end{align*}

To derive $(b/c)^*$ for accumulated payoffs, we calculate $D'(\vs)$, the first-order term in the instantaneous rate of change in $\hat{s}$ from state $\vs$, starting with Eq.~\eqref{eq:D's}:
\begin{align*}
D'(\vs) & =  \sum_{i \in G} \pi_i  s_i \left( f_i^{(0)}(\vs) - f_i^{(2)}(\vs) \right)\\
& = W \sum_{i \in G} \pi_i  s_i \left[ \pi_i  \left(-c s_i + b s_i^{(1)} \right) - \sum_{j \in G} p_{ij}^{(2)} \pi_j \left(-c s_j + b s_j^{(1)} \right) \right] \\
& = W \left[ -c \left( \sum_{i \in G}\pi_i^2 s_i^2 -\sum_{i,j \in G} \pi_i p_{ij}^{(2)} \pi_j s_i s_j \right)
+ b \left( \sum_{i \in G}\pi_i^2 s_is_i^{(1)} -\sum_{i,j \in G} \pi_i p_{ij}^{(2)} \pi_j s_i s_j^{(1)} \right) \right]\\
& = W \left[ -c \left( \sum_{i \in G}\pi_i^2 s_i^2 -\sum_{i,j \in G} \pi_j^2 p_{ji}^{(2)} s_i s_j \right)
+ b \left( \sum_{i \in G}\pi_i^2 s_is_i^{(1)} -\sum_{i,j \in G} \pi_j^2 p_{ji}^{(2)}  s_i s_j^{(1)} \right) \right]\\
& = W \left[ -c  \sum_{i,j \in G}\pi_i^2 p_{ij}^{(2)} \left( s_i^2 - s_is_j \right)
+ b  \sum_{i,j \in G}\pi_i^2 p_{ij}^{(2)} \left( s_i s_i^{(1)} - s_js_i^{(1)} \right) \right]\\
& = W \left[ -c  \sum_{i,j \in G}\pi_i^2 p_{ij}^{(2)} \left( s_i^2 - s_is_j \right)
+ b  \sum_{i,j,k \in G}\pi_i^2 p_{ij}^{(2)} p_{ik} \left( s_is_k - s_js_k \right) \right].
\end{align*}

Now combining with Eqs.~\eqref{eq:rhoweaku1} and \eqref{eq:sisjtau}, we obtain the fixation probability of cooperation:
\[
\rho_C = \frac{1}{N} + \frac{\delta W}{2N} \left(
-c \sum_{i,j \in G}\pi_i^2 p_{ij}^{(2)} \tau_{ij}
+ b  \sum_{i,j,k \in G}\pi_i^2  p_{ij}^{(2)} p_{ik} \left( \tau_{jk} - \tau_{ik} \right) \right)
+ \mathcal{O}(\delta^2).
\]
The critical benefit-cost ratio is therefore
\[
\left( \frac{b}{c} \right)^* = \frac{ \sum_{i,j \in G}\pi_i^2 p_{ij}^{(2)} \tau_{ij}}{ \sum_{i,j,k \in G}\pi_i^2  p_{ij}^{(2)} p_{ik} \left( \tau_{jk} - \tau_{ik} \right)}.
\]

\subsection{Different interaction and replacement graphs}
\label{sec:interactionreplacement}

Here we consider a variation in which the edge weights for game interaction differ from those for replacement \cite{OhtsukiBreaking,ohtsuki2007breakingJTB,Taylor,allen2014games}. In this case, the population structure is represented by a pair of weighted graphs $(G,I)$ which have the same vertex set $V$.  We assume that the replacement graph $G$ is connected (so that the population is unitary), but the interaction graph $I$ need not be.  

We define an $(n,m)$-random walk to be a random walk consisting of $n$ steps according to the weights of $G$, followed by $m$ steps according to the weights of $I$.  Let $p^{(n,m)}_{ij}$ denote the probability that an $(n,m)$-random walk starting at $i$ terminates at $j$. The payoff $f_i(\vs)$ to vertex $i$ in state $\vs$ can be written analogously to Eq.~\eqref{eq:fPD} as
\[
f_i(\vs) = -cs_i + b \sum_{j \in V} p^{(0,1)}_{ij} s_j.
\]
(This expression uses averaged payoffs; the extension to accumulated payoffs is straightforward.) 

The results in Section \ref{sec:fixweak} carry over verbatim, with the understanding that $\pi_i$ is defined using the weights for the replacement graph $G$. Following the steps in Section \ref{sec:shortterm}, we obtain 
\begin{equation}
\label{eq:D'sGI}
D'(\vs) = \sum_{i \in V} \pi_i s_i \left( -c \left(s_i^{(0,0)} - s_i^{(2,0)} \right) + b \left(s_i^{(0,1)} - s_i^{(2,1)} \right) \right).
\end{equation}
Above, we have adopted the notation $s_i^{(n,m)} = \sum_{j \in V} p_{ij}^{(n,m)} s_j$. Applying Eq.~\eqref{eq:sisjtau}, we find an analogue of Eq.~\eqref{eq:sinmin}:
\begin{equation}
\label{eq:sinminGI}
\left \langle \sum_{i \in G} \pi_i s_i \left( s_i^{(n_1,m_1)} - s_i^{(n_2,m_2)} \right) \right \rangle^\circ_{\mathbf{u}}  
= \frac{\tau^{(n_2,m_2)}-\tau^{(n_1,m_1)}}{2N},
\end{equation}
where
\[
\tau^{(n,m)} = \sum_{i,j \in V} \pi_i p_{ij}^{(n,m)} \tau_{ij}.
\]
Substituting Eqs.~\eqref{eq:D'sGI} and \eqref{eq:sinminGI} into Eq.~\eqref{eq:sfix} we obtain the fixation probability of cooperation:
\[
\rho_\C = \frac{1}{N} +  
\frac{\delta}{2N} \Big(  -c \tau^{(2,0)} + b \left( \tau^{(2,1)} - \tau^{(0,1)} \right) \Big)
+ \mathcal{O}(\delta^2).
\]
The critical benefit-cost ratio is therefore
\[
\left(\frac{b}{c}\right)^* =  \frac{\tau^{(2,0)}}{ \tau^{(2,1)} - \tau^{(0,1)}}.
\]

\subsection{Birth-Death updating}

In Birth-Death (BD) updating \cite{Ohtsuki}, first an individual is chosen to reproduce, with probability proportional to its reproductive rate.  The offspring then replaces a random neighbor, chosen with probability proportional to edge weight.  

We study a continuous-time analogue of the birth-death process, in which site $i$ replaces site $j$ at rate $F_i(\vs) p_{ij}$:
\begin{equation}
\operatorname{Rate}[i \rightarrow j](\vs) =  F_i(\vs) p_{ij}.
\end{equation}

As for DB updating, we suppose that a new type is equally likely to arise at each vertex.  This assumption is mathematically convenient, although it is arguably more natural to suppose that mutations arise in proportion to how often a vertex is replaced \cite{allen2014measures,tarnita2014measures,allen2015molecular}.  

Our methods for DB updating largely carry over to BD, with some modifications that we describe here.

\subsubsection{Fixation probability under weak selection}

Instead of weighting each vertex $i$ by its degree, we weight by the inverse degree $1/w_i$.  The inverse degree can be understood as the reproductive value of vertex $i$ under BD updating \cite{maciejewski2014reproductive}, and is proportional to the fixation probability of a neutral mutation arising at $i$ \cite{maciejewski2014reproductive,allen2015molecular}.  We therefore replace $\hat{s}$ with the quantity
\[ 
\tilde{s} =  \frac{1}{\tilde{W}}\sum_{i \in G} \frac{s_i}{w_i},  \qquad \tilde{W} = \sum_{i \in G} \frac{1}{w_i}.
\]
The arguments of Section \ref{sec:fixweak} carry over using $\tilde{s}$ in place of $\hat{s}$, leading to an analogue of Eq.~\eqref{eq:rhoweak2}:
\begin{equation*}
\rho_{\vs_0}= \tilde{s}_0 + \delta \langle D' \rangle^\circ_{\vs_0} + \mathcal{O}(\delta^2).
\end{equation*}
Above, $D'(\vs)$ is the first-order term of the expected instantaneous rate of change in $\tilde{s}$, defined by the following analogues of Eqs.~\eqref{eq:Ddef} and \eqref{eq:Dweak}:
\begin{align*}
\E \left[\tilde{S}(t+\epsilon)-\tilde{S}(t) \big| S(t)=\vs \right] \; & = \;  D(\vs) \epsilon + o(\epsilon) \qquad (\epsilon \to 0^+)\\
D(\vs) \; & = \; \delta D'(\vs) + \mathcal{O}(\delta^2) \quad (\delta \to 0).
\end{align*}
When we consider the uniform distribution $\mathbf{u}$ for initial mutant appearance, Eq.~\eqref{eq:rhoweaku1} for $\rho_\mathbf{u}$ carries over verbatim.  

We calculate the instantaneous rate of change $D(\vs)$ as follows:
\begin{align*}
D(\vs) &= \frac{1}{\tilde{W}} \sum_{i,j\in G} \operatorname{Rate}[i \rightarrow j] \frac{s_i - s_j}{w_j}\\
&= \frac{1}{\tilde{W}}\sum_{i,j \in G} \frac{w_{ij}}{w_i w_j}  F_i(\vs)(s_i - s_j) \\
&= \frac{\delta}{\tilde{W}} \sum_{i,j \in G} \frac{w_{ij}}{w_i w_j} f_i(\vs) (s_i - s_j) + \mathcal{O}(\delta^2)\\
&=\frac{\delta}{\tilde{W}} \sum_{i,j \in G} \frac{w_{ij}}{w_i w_j}  s_i (f_i(\vs) - f_j(\vs)) + \mathcal{O}(\delta^2)\\
&=\frac{\delta}{\tilde{W}} \sum_{i,j \in G} \frac{w_{ij}}{w_i w_j} \left( -c ( s_i ^2 - s_i s_j)  + b \sum_{k \in G} \left(  p_{ik} s_i s_k - p_{jk} s_i s_k \right)\right) + \mathcal{O}(\delta^2).
\end{align*}
Hence we get, after swapping $i$ and $j$ second part of the second sum,
\begin{equation}
\label{eq:DweakBD}
D'(\vs)  = \frac{1}{\tilde{W}} \left( - c \sum_{i,j \in G} \frac{w_{ij}}{w_i w_j} (s_i^2 - s_i s_j) + b \sum_{i,j,k \in G} \frac{w_{ij}}{w_i w_j}p_{ik} (s_i s_k - s_j s_k) \right)
\end{equation}

\subsubsection{Coalescence and assortment}

The final key step is finding the appropriate modification of the coalescing random walk (CRW) used in Section~\ref{sec:CRW}. The rate of stepping from $i$ to $j$ must correspond to the rate at which the $j \to i$ replacement occurs at neutrality, which is $w_{ij}/w_j$ for BD updating. Therefore the CRW for BD is a continuous-time process in which steps from $i$ to $j$ occur at rate $w_{ij}/w_j$ (instead of at the usual rate $p_{ij}=w_{ij}/w_i$).  

We let $\tilde{\tau}_{ij}$ denote the coalescence time from vertices $i$ and $j$ under this modified CRW.  (Note that the  $\tilde{\tau}_{ij}$ are defined in continuous time) These coalescence times satisfy the system of equations
\begin{equation}
\label{eq:tauBD}
\tilde{\tau}_{ij} = \begin{cases}
0 & i=j\\
\displaystyle \frac{1}{\sum_{\ell \in G} \frac{w_{i\ell} + w_{j \ell}}{w_\ell} }
\left( 1 + \sum_{k \in G} \frac{w_{ik} \tilde{\tau}_{ik} + w_{jk}\tilde{\tau}_{jk}}{w_k} \right) & i \neq j.
\end{cases}
\end{equation}
Again, the connectedness of $G$ implies that this system has a unique solution, which can be obtained in polynomial time (see Section \ref{sec:computational}).

Armed with this modified CRW, the arguments of Section \ref{sec:assortment} can be readily adapted to BD updating.  In particular, Eq.~\eqref{eq:sisjtau} has the analogue
\begin{equation}
\label{eq:sisjtauBD}
\left \langle \frac{1}{N} - s_is_j \right \rangle^\circ_{\mathbf{u}} = \frac{\tilde{\tau}_{ij}}{N}
\end{equation}
In contrast to Eq.~\eqref{eq:sisjtau}, there is no factor of 2 in the denominator on the right-hand side of Eq.~\eqref{eq:sisjtauBD}.  This is because the $\tilde{\tau}_{ij}$ are defined in continuous time, whereas the $\tau_{ij}$ are defined in discrete time.

\subsubsection{Condition for success}

Combining Eqs.~\eqref{eq:rhoweaku1}, \eqref{eq:DweakBD} and \eqref{eq:sisjtauBD}, we obtain the fixation probability of cooperation
\[
\rho_C = \frac{1}{N} + \frac{\delta}{\tilde{W}N} \left( 
- c \sum_{i,j \in G} \frac{w_{ij}}{w_i w_j} \tilde{\tau}_{ij} 
+ b \sum_{i,j,k \in G} \frac{w_{ij}}{w_i w_j}p_{ik} (\tilde{\tau}_{jk} - \tilde{\tau}_{ik} ) \right)
+ \mathcal{O}(\delta^2).
\]
The critical benefit-cost ratio is therefore
\[ 
\left(\frac{b}{c}\right)^* = \frac{ \sum_{i,j \in G} \frac{w_{ij}}{w_i w_j} \tilde{\tau}_{ij}}{ \sum_{i,j,k \in G} \frac{w_{ij}}{w_i w_j}p_{ik} (\tilde{\tau}_{jk} - \tilde{\tau}_{ik})}. 
\]

\section{Examples}
\label{sec:examples}

\subsection{Island model}
\label{sec:island}

Consider a population subdivided into $n$ islands of size $N_1, \ldots, N_n$.  Weights are 1 for distinct vertices on the same island and $m < 1$ for vertices on different islands.  Therefore, the weighted degree of a vertex on island $i$ is $w_i=N_i-1 + m(N-N_i)$.

We let $\tau_{ii'}$ denote the coalescence time for two distinct vertices on island $i$, and $\tau_{ij}$ the coalescence time for a pair on islands $i$ and $j$.  The recurrence relations \eqref{eq:mrecur} become

\begin{equation}
\label{eq:tauisland}
\begin{split}
\tau_{ii'} & = 1 + \frac{1}{N_i-1 + m(N-N_i)} \left( (N_i-2) \tau_{ii'} + m \sum_{j \neq i} N_j \tau_{ij} \right)\\[4mm]
\tau_{ij} & = 1 + \frac{(N_i-1) \tau_{ij} + m \sum_{k \neq i,j} N_k \tau_{kj} + m(N_j-1) \tau_{jj'}}{2(N_i-1 + m(N-N_i))} \\[2mm]
& \quad \qquad+ \frac{(N_j-1) \tau_{ij} + m \sum_{k \neq i,j} N_k \tau_{ik} + m(N_i-1) \tau_{ii'}}{2(N_j-1 + m(N-N_j))}.
\end{split}
\end{equation}

\subsubsection{Two islands}

In the case of two islands, we obtain the critical $b/c$ ratio exactly, by solving Eq.~\eqref{eq:tauisland} with the aid of Mathematica and applying Eq.~\eqref{eq:altcritbc}.  We obtain an answer in the form $(b/c)^* = \mathrm{num}/\mathrm{denom}$, where the numerator and denominator are given respectively by
\begin{multline*}
\mathrm{num} = \left((N m+N-2)^2-D^2 (m-1)^2\right)^2 \\
\times \bigg \{ D^6 (m-1)^3 m-D^4 (m-1)^2 \left(3 N^2 m (m+1)-2 N \left(m^2+5
   m-2\right)+10 m-6\right)\\
   +D^2 N (m-1) \Big[3 N^3 m (m+1)^2-2 N^2 \left(2 m^3+9 m^2+6 m-1\right)\\
   +2 N \left(7 m^2+6 m-7\right)-8 (m-2)\Big] \\
   -(N-2) N^2 (m+1) (N m+2) (N m+N-2)^2\bigg \},
\end{multline*}
\begin{multline*}
\mathrm{denom} = 4 \bigg\{- D^8 (m-1)^4 m \left(N (m-1)+m^2-3 m+4\right)\\
-D^6 (m-1)^3 \Big[N^3 m \left(m^3-m^2-3
   m-1\right)  +N^2 \left(-5 m^4-5 m^3+11 m^2+9 m+2\right)\\
   +4 N \left(4 m^3-2 m^2-4 m-1\right)-2 m \left(m^2+6
   m-9\right)\Big]\\
   +D^4 (m-1)^2 \Big[(N^5 m (m+1)^2 \left(3 m^2-1\right)-N^4 \left(9 m^5+36 m^4+30
   m^3+m+4\right)\\
   +N^3 \left(46 m^4+96 m^3+28 m^2+8 m+22\right)-4 N^2 \left(m^4+22 m^3+18 m^2+11\right)\\
   +8 N \left(m^3+9 m^2-2 m+4\right)-8 m (m+1) \Big]\\
   -D^2 N (m-1) (N m+N-2)^2 \Big[N^4 m \left(3 m^3+5 m^2+3
   m+1\right)-N^3 \left(7 m^4+19 m^3+15 m^2+5 m-2\right)\\
   +4 N^2 \left(4 m^3+6 m^2+3 m-2\right)-2 N \left(m^3+4
   m^2+5 m-8\right)+8 (m-2)\Big]\\
   +(N-2) N^2 (m+1) (N m+N-2)^4 \left(N^2 m^2+N m-2\right)\bigg\}.
\end{multline*}
Above, $N=N_1 + N_2$ is the total population size and $D=|N_1 - N_2|$ is the difference in size between the islands.

For rare migration ($m \to 0$), the critical $b/c$ ratio becomes
\begin{equation}
\lim_{m \to 0} \left(\frac{b}{c} \right)^* = 
\frac{ N^2(N-2)^3+ N \left(N^2-7 N+8\right) D^2+ (2 N-3)D^4}{4 N (N-2) \left(N-D^2\right)}.
\end{equation}
If, in addition, the islands are evenly sized ($D=0$), we have 
\[
\lim_{m \to 0} \left(\frac{b}{c} \right)^* = (N-2)^2/4.
\]
We can show that the above value is at least a local infimum of $(b/c)^*$.  This is based on the following observations (made with the aid of Mathematica):
\renewcommand{\labelenumi}{(\roman{enumi})}
\begin{enumerate}
\item For all $N \geq 4$ and $0<m<1$,
\[
\frac{d(b/c)^*}{dD} \bigg|_{D=0} = 0.
\]
\item For all $N \geq 4$ and $0<m<1$,
\[
\frac{d^2(b/c)^*}{dD^2} \bigg|_{D=0} > 0.
\]
\item For $D=0$ and all $N \geq 4$, $(b/c)^*$ approaches an infimum as $m \to 0$.
\end{enumerate}
We conjecture that $(b/c)^*=(N-2)^2/4$ is a global infimum as well.  

\subsubsection{More than two islands}

For three and four islands, we have obtained the exact critical $b/c$ ratio under the limit $m \to 0$, but the results are too lengthy to record here.  For these cases, we have shown with the aid of Mathematica that $\lim_{m \to 0} (b/c)^*$ is minimized when the islands are evenly sized.  

For five islands, we have obtained $\lim_{m \to 0} (b/c)^*$ under the assumption that two of the islands have equal size.  Under this assumption, we have then shown that $\lim_{m \to 0} (b/c)^*$ is minimized when the remaining three islands also have the same size.

For any number $n$ of evenly-sized islands ($N_i=N/n$), with arbitrary migration, we have
\begin{equation}
\label{eq:nevenislands}
\left(\frac{b}{c} \right)^* = 
\frac{ (N-2)(N-n+mN(n-1))^2}{Nn(N-n+m^2(N-1)) - 2(N-n+mN(n-1))^2}.
\end{equation}
This value of $(b/c)^*$ is minimized as $m \to 0$, at which point it approaches the value
\begin{equation}
\label{eq:evenislandlimit}
\lim_{m \to 0} \left(\frac{b}{c} \right)^* = 
\frac{ (N-2)(N-n)}{Nn-2N+2n}.
\end{equation}
We note that this limiting value is approached in the regime $m \ll 1/n$.  We conjecture that, for each fixed population size $N \geq 4$ and number of islands $n \geq 2$, the right-hand side of Eq.~\eqref{eq:evenislandlimit} is a global infimum of all positive values of $(b/c)^*$ for all migration rates $0<m<1$ and all distributions of island sizes $\{N_i\}_{i=1}^n$, as long as there are at least two individuals per island ($N_i \geq 2$ for each $i$).

\subsection{Joined stars}
\label{sec:joinedstars}

Here we derive the results shown in Fig.~\ref{fig:surgery}bcf of the main text.

\paragraph{Two stars joined by hubs} For two $n$-stars joined by an edge between their hubs, solving Eqs.~\eqref{eq:mrecur} and \eqref{eq:miirecur} yield 
\begin{equation}
\tau_\mathrm{HH}^+ = \frac{8n^3+22n^2+17n+15}{(n+1)^2(2n+5)} \qquad \tau_\mathrm{LL}^+ = \frac{12n+10}{2n+5},
\end{equation}
By inspection, we have
\begin{equation*}
\pi_\mathrm{H} = \frac{n+1}{4n+2}, \qquad \pi_\mathrm{L} = \frac{1}{4n+2}, \qquad p_\mathrm{HH}^{(2)} =  \frac{n^2+n+1}{(n+1)^2}, 
\qquad p_\mathrm{LL}^{(2)} = \frac{1}{n+1}.
\end{equation*}
Substituting these values into Eq.~\eqref{eq:altcritbc} yields
\begin{equation*}
\left( \frac{b}{c} \right)^* = \frac{(n+1)^2 \left(10 n^2+17 n+5\right)}{n(4 n^3+12 n^2+11 n+5)} \xrightarrow{n \to \infty} \frac{5}{2}.
\end{equation*}

\paragraph{Two stars joined leaf to hub} For two $n$-stars with an edge joining the leaf to the hub of the other, a similar procedure yields
\[
\left( \frac{b}{c} \right)^* =\frac{(n+1) \left(36 n^2+90 n+19\right)}{4 n \left(3 n^2+11 n+9\right)} 
\xrightarrow{n \to \infty} 3.
\]

\paragraph{Two stars joined leaf to leaf} For two $n$-stars with an edge joining a leaf of each, we obtain
\[
\left( \frac{b}{c} \right)^* = \frac{2 (n+1) \left(490 n^4+3065 n^3+5982 n^2+4559 n+1136\right)}{350 n^5+2671 n^4+6818
   n^3+7489 n^2+3544 n+568}
   \xrightarrow{n \to \infty} \frac{14}{5}.
\]

\paragraph{``Dense cluster" of stars} For $m$ $n$-stars, with each hub  joined to each other hub, we obtain
\[ \left(\frac{b}{c} \right)^* = \frac{\text{num}}{\text{denom}}\]
where
\begin{multline*}
\text{num} = (m+n-1)^2 \\ 
\times \big(2 m^4+m^3 (11 n-8)+m^2 \left(20 n^2-25n+11\right) \\
+m \left(12 n^3-22 n^2+12 n-7\right)-2 \left(2n^3+n^2+n-1\right)\big),
\end{multline*}
\begin{multline*}
\text{denom}= 2 m^5 (n-1)+m^4 \left(12 n^2-19n+10\right)+m^3 \left(26 n^3-61 n^2+61 n-19\right)\\
+3 m^2 \left(8n^4-28 n^3+40 n^2-29 n+6\right)+m \left(8 n^5-48 n^4+92 n^3-103n^2+57 n-9\right)\\
-8 n^5+24 n^4-34 n^3+32 n^2-14 n+2.
\end{multline*}
Letting $n \to \infty$ for fixed $m$, we obtain 
\[ \lim_{n \to \infty} \left(\frac{b}{c} \right)^* =  \frac{3m-1}{2m-2}.\]
If we then take $m \to \infty$, we have $(b/c)^* \to 3/2$.

\subsection{Ceiling fan}

Let us now consider the ``ceiling fan" graph (Fig.~\ref{fig:surgery}e of the main text), in which each of the $n$ leaves of a star is joined by an edge to one other. Solving Eqs.~\eqref{eq:mrecur} and \eqref{eq:miirecur} yields
\begin{equation*}
\tau_\mathrm{HH}^+ = \frac{9n-3}{n+3}, \qquad \tau_\mathrm{LL}^+ = \frac{15n}{2(n+3)},
\end{equation*}
where H and L indicate hub and leaf vertices, respectively.  By inspection, we have
\begin{equation*}
\pi_\mathrm{H} = \frac{1}{3}, \qquad \pi_\mathrm{L} = \frac{2}{3n}, \qquad p_\mathrm{HH}^{(2)} = \frac{1}{2}, 
\qquad p_\mathrm{LL}^{(2)} = \frac{n+2}{4n}.
\end{equation*}
Substituting these values into Eq.~\eqref{eq:altcritbc} yields
\begin{equation*}
\left( \frac{b}{c} \right)^* = \frac{4(6n-7)}{3n-16} \xrightarrow{n \to \infty} 8.
\end{equation*}

\subsection{Wheel}

In a wheel graph (Fig.~\ref{fig:surgery}e of the main text), each of the $n$ leaves is joined to two neighboring leaves as well as to the hub.  We define $\tau_{L,j}$ to be the coalescence time for two leaves that are $j$ apart, $0 \leq j \leq n$.  Clearly, we have $\tau_{L,0}  = \tau_{L,n} = 0$.  We also define $\tau_{LH}$ to be the coalescence time between a leaf and the hub.

The recurrence relations \eqref{eq:mrecur} for coalescence times become
\begin{align}
\label{eq:tauLj}
\tau_{L,j} & = 1 + \frac{1}{3} \left( \tau_{L,j-1} + \tau_{L,j+1} + \tau_{LH} \right) \quad \text{for $1 \leq j \leq n-1$},\\
\label{eq:tauH1}
\tau_{LH} & = 1 + \frac{1}{3} \tau_{LH} + \frac{1}{2n} \sum_{j=0}^{n-1} \tau_{L,j}.
\end{align}
Solving Eq.~\eqref{eq:tauH1} for $\tau_{LH}$ yields
\begin{equation}
\label{eq:tauH2}
\tau_{LH}  = \frac{3}{2} + \frac{3}{4n} \sum_{j=0}^{n-1} \tau_{L,j}.
\end{equation}
It is convenient to define $\tau_{L,j}'=\tau_{L,j}-\tau_{LH}$.  Then Eqs.~\eqref{eq:tauLj} and \eqref{eq:tauH2} become
\begin{align}
\label{eq:tauLjprime}
\tau_{L,j}' & = 1 + \frac{1}{3} \left( \tau_{L,j-1}' + \tau_{L,j+1}' \right)  \quad \text{for $1 \leq j \leq n-1$},\\
\label{eq:taujsum}
\tau_{LH} & = 6 + \frac{3}{n}\sum_{j=0}^{n-1} \tau_{L,j}'.
\end{align}
We guess a solution (ansatz) of the form
\begin{equation}
\label{eq:wheelansatz}
\tau_{L,j}' = a  + b\left( \gamma^j + \gamma^{n-j} \right).
\end{equation}
Substituting this ansatz into Eq.~\eqref{eq:tauLjprime}, we obtain
\begin{align*}
a  + b\left( \gamma^j + \gamma^{n-j} \right) & = 
1 + \frac{2a}{3} + \frac{b}{3} \left( \gamma^{j-1} + \gamma^{n-j+1} + \gamma^{j+1} + \gamma^{n-j-1} \right)\\
& = 1 + \frac{2a}{3} + \frac{b}{3} \left( \gamma + \gamma^{-1} \right) \left( \gamma^j + \gamma^{n-j} \right).
\end{align*}
For this to hold for all $1 \leq j \leq n-1$ necessitates that
\[
a = 1 + \frac{2a}{3} \quad \text{and} \quad \gamma + \gamma^{-1} = 3,
\]
which gives the solutions
\[
a=3 \quad \text{and} \quad \gamma = \frac{3 \pm \sqrt 5}{2}.
\]
It turns out not to matter which value of $\gamma$ is used; we will use $\gamma = (3-\sqrt{5})/2$.  To solve for $b$, we substitute into Eq.~\eqref{eq:taujsum}, 
\begin{align}
\nonumber
\tau_{LH} & = 6 + \frac{3}{n} \sum_{j=0}^{n-1} \left( 3 + b\left( \gamma^j + \gamma^{n-j} \right) \right)\\
\label{eq:tauLHb}
& = 15 + \frac{3b}{n} \frac{(1+\gamma)(1-\gamma^n)}{1-\gamma}.
\end{align}
Additionally, since $\tau_{L,0}=0$, we have 
\begin{equation}
\label{eq:tauLHb2}
\tau_{LH}  = -\tau_{L,0}' = -b(1+\gamma^n)-3.
\end{equation}
Combining Eqs.~\eqref{eq:tauLHb} and \eqref{eq:tauLHb2} and solving for $b$ yields
\[
b  = \frac{-18n(1-\gamma)}{ 3 (1+\gamma)(1-\gamma^n) +n (1+\gamma^n)(1-\gamma)}.
\]

Substituting this value of $b$ into Eqs.~\eqref{eq:tauLHb2} and \eqref{eq:wheelansatz}, we obtain the coalescence times
\begin{align*}
\tau_{LH} & = \frac{18n(1-\gamma)(1+\gamma^n)}{ 3 (1+\gamma)(1-\gamma^n) +n (1+\gamma^n)(1-\gamma)} - 3,\\
\tau_{L,j} & = \tau_{L,j}' +\tau_{LH} \\
& = \frac{18n(1-\gamma)}{ 3 (1+\gamma)(1-\gamma^n) +n (1+\gamma^n)(1-\gamma)}
\left(1+\gamma^n-\gamma^j - \gamma^{n-j} \right)\\
& =  \frac{18n(1-\gamma)}{ 3 (1+\gamma)(1-\gamma^n) +n (1+\gamma^n)(1-\gamma)}
\left(1-\gamma^j \right)\left(1-\gamma^{n-j}\right).
\end{align*}
In particular, for neighboring leaves ($j=1$), we have
\begin{equation}
\tau_{L,1} = \frac{18n(1-\gamma)^2(1-\gamma^{n-1})}{ 3 (1+\gamma)(1-\gamma^n) +n (1+\gamma^n)(1-\gamma)}.
\end{equation}

Turning now to remeeting times, we compute
\begin{align*}
\tau_{HH}^+ & = 1 + \tau_{LH}\\
& = \frac{18n(1-\gamma)(1+\gamma^n)}{ 3 (1+\gamma)(1-\gamma^n) +n (1+\gamma^n)(1-\gamma)}-2,\\
\tau_{LL}^+ & = 1 + \tfrac{1}{3}\tau_{LH} + \tfrac{2}{3} \tau_{L,1} \\
& = \frac{18n(1-\gamma)}{ 3 (1+\gamma)(1-\gamma^n) +n (1+\gamma^n)(1-\gamma)}
\left(1+\gamma^n - \frac{2\gamma}{3} (1+\gamma^{n-2} ) \right).
\end{align*}
The other values needed to compute $(b/c)^*$ are
\[
\pi_H = \frac{1}{4}, \qquad \pi_L = \frac{3}{4n}, \qquad p_{HH}^{(2)} = \frac{1}{3}, \qquad p_{LL}^{(2)} = \frac{2n+3}{9n}.
\]
Using the above values, the critical $b/c$ ratio can be obtained from Eq.~\eqref{eq:altcritbc}
\begin{equation}
\label{eq:bcwheel}
\left(\frac{b}{c} \right)^* 
 = \frac{\pi_H \tau_{HH}^+  + n  \pi_L \tau_{LL}^+ - 2}
{\pi_H \tau_{HH}^+ p_{HH}^{(2)}  + n  \pi_L \tau_{LL}^+ p_{LL}^{(2)} - 2}.
\end{equation}

Now turning to the $n \to \infty$ limit, we calculate:
\begin{align*}
\lim_{n \to \infty} \tau_{HH}^+ & = 16,\\
\lim_{n \to \infty} \tau_{LL}^+ & = 18 - 12 \gamma=6\sqrt{5},\\
\lim_{n \to \infty} p_{LL}^{(2)} & = \frac{2}{9}.
\end{align*}
Substituting into Eq.~\eqref{eq:bcwheel} and simplifying gives
\[
\lim_{n \to \infty} \left(\frac{b}{c} \right)^* = \frac{429 + 90\sqrt{5}}{82}.
\]

\section{Direct and inclusive fitness}
\label{sec:DirectInclusive}

The conditions for success derived in Sections \ref{sec:fixweak}--\ref{sec:examples} are based on fixation probability and, in the case of nonzero mutation, expected degree-weighted abundance.  Other approaches in the literature are based on the fitness and/or inclusive fitness of individuals.  In the interest of synthesizing different approaches, we calculate the fitness and---in the case of the donation game \eqref{eq:PD}---the inclusive fitness effect associated to each vertex.

\subsection{Fitness}

The (direct) fitness of an individual is a measure of its reproductive success. In homogeneous populations, the fitness of an individual is defined as its survival probability plus its expected number of offspring.  For heterogeneous populations, individuals have different \emph{reproductive values}---that is, they make different expected contributions to the future gene pool of the population, even under neutral drift \cite{taylor1990allele,maciejewski2014reproductive,taylor2014hamilton,tarnita2014measures}.  For our model, we identify the reproductive value of vertex $i$ as its relative weighted degree $\pi_i$ \cite{maciejewski2014reproductive}, which is also equal to the fixation probability of a neutral mutation arising at this vertex \cite{maciejewski2014reproductive,allen2015molecular}.

We formally define the fitness of an individual as its survival probability multiplied by its own reproductive value, plus the expected total reproductive value of all offspring it produces, over a short time interval $[t, t+\epsilon)$. The fitness $v_i(\vs)$ of vertex $i$ in state $\vs$ is calculated as follows:
\begin{align}
\nonumber
v_i(\vs) & = \pi_i \left( 1- \epsilon \sum_{j \in G} \operatorname{Rate}[j \rightarrow i](\vs) \right)
+ \epsilon \sum_{j \in G}  \pi_j \operatorname{Rate}[i \rightarrow j](\vs) + o(\epsilon)\\
\nonumber
& = \pi_i + \epsilon \left( \sum_{j \in G} \frac{ \pi_j w_{ij} F_i(\vs)}{\sum_{k \in G} w_{kj} F_k(\vs)} - \pi_i \right) + o(\epsilon)\\
\label{eq:directfit}
& = \pi_i + \epsilon \delta \pi_i \left( f_i(\vs) -  f_j^{(2)} (\vs) \right) + R(\epsilon,\delta),
\end{align}
where the remainder term $R(\epsilon, \delta)$ satisfies 
\[
\lim_{\epsilon \to 0^+} \lim_{\delta \to 0} \frac{R(\epsilon,\delta)}{\epsilon\delta} = 0.
\]

The constant term in Eq.~\eqref{eq:directfit} is the fitness of vertex $i$ under neutral drift, which is equal to  its reproductive value $\pi_i$.  The second term represents the effects of weak selection.  We define the \emph{direct fitness effect} of selection on individual $i$, denoted $v_i'(\vs)$ to be the coefficient of $\epsilon \delta$:
\begin{equation}
\label{eq:directfiteffect}
v_i'(\vs) = \pi_i\left( f_i(\vs) -  f_i^{(2)} (\vs) \right).
\end{equation}
For an arbitrary matrix game \eqref{eq:game}, substituting the payoffs from Eq.~\eqref{eq:fgeneral}, we obtain
\begin{equation}
\begin{split}
\label{eq:directfiteffectgeneral}
v_i'(\vs) & = \pi_i\Bigg( a \left[ s_i s_i^{(1)} - \sum_{j \in G} p_{ij}^{(2)} s_j s_j^{(1)}  \right]\\ 
& \qquad + b \left[ s_i \left(1-s_i^{(1)} \right) - \sum_{j \in G} p_{ij}^{(2)} s_j   \left(1-s_j^{(1)} \right) \right] \\
& \qquad + c \left[ (1-s_i) s_i^{(1)} - \sum_{j \in G} p_{ij}^{(2)} \left(1-s_j \right) s_j^{(1)} \right]   \\
 & \qquad + d \left[  (1-s_i)  \left(1-s_i^{(1)} \right)  
 - \sum_{j \in G} p_{ij}^{(2)}\left(1-s_j \right) \left(1-s_j^{(1)} \right) \right] \Bigg).
\end{split}
\end{equation}

For the donation game \eqref{eq:PD}, Eq.~\eqref{eq:directfiteffectgeneral} simplifies to
\begin{equation}
\label{eq:directfiteffectPD}
v_i'(\vs) = \pi_i \left(-cs_i + b s_i^{(1)} + cs_i^{(2)} - b s_i^{(3)} \right).
\end{equation}

Eqs.~\eqref{eq:directfiteffect}--\eqref{eq:directfiteffectPD} apply to a particular state $\vs$.  We can also calculate  the direct fitness effect of a particular strategy at a particular vertex for the overall evolutionary process.  For this we introduce mutation with probability $0<u<1$, as discussed in Section \ref{sec:mutation}.  The expected direct fitness effect of strategy A at vertex $i$, under the neutral MSS distribution, can be written as 
\begin{equation}
\label{eq:directMSS}
\E^\circ_\mathrm{MSS} \left[ v_i' | s_i=1  \right ]  =  \pi_i \E^\circ_\mathrm{MSS} \left[ f_i-  f_i^{(2)}  \big |  s_i=1 \right].
\end{equation}

To explicitly compute the right-hand side of \eqref{eq:directMSS} for the general game \eqref{eq:game} requires triplet IBD probabilities \cite{taylor2013inclusive,taylor2016hamilton} and is beyond the scope of this work.  However, for the donation game \eqref{eq:PD}, only pairwise IBD probabilities are required.  A generalization of Lemma 3 of \cite{allen2014games} yields
\[
\E^\circ_\mathrm{MSS} \left[ s_j | s_i=1  \right ]  =  \frac{1+q_{ij}}{2}.
\]
Applying this identity to Eq.~\eqref{eq:directfiteffectPD} yields the overall direct fitness effect of cooperation at vertex $i$:
\begin{align*}
\E^\circ_\mathrm{MSS} \left[  v_i' | s_i=1  \right ] & = \frac{\pi_i}{2} \left( -c\left( q_i^{(0)} - q_i^{(2)} \right) + b \left( q_i^{(1)} - q_i^{(3)}\right) \right)\\
& = \frac{u \pi_i}{2} \left[ -c \tau_i^{(2)} + b\left( \tau_i^{(3)} -\tau_i^{(1)} \right) \right] + \mathcal{O}(u^2)
\qquad \qquad  (u \to 0).
\end{align*}
Above, we have defined $q_i^{(n)} = \sum_{j \in G} p_{ij}^{(n)} q_{ij}$ and $\tau_i^{(n)} = \sum_{j \in G} p_{ij}^{(n)} \tau_{ij}$.  The second equality above uses Eq.~\eqref{eq:qtau}.

\subsection{Inclusive fitness}

Inclusive fitness theory \cite{Hamiltona,RoussetBook,Taylor} analyzes the evolution of social behavior using a quantity called the \emph{inclusive fitness effect}.  This quantity is defined as the effect that this individual has on its own fitness, plus a weighted sum of the effects it has on the fitnesses of all others, where the weights quantify genetic relatedness.  

In order to formulate such a quantity, there must be a well-defined contribution that each individual makes to its own fitness and to the fitness of each other individual.  However, Eq.~\eqref{eq:directfiteffectgeneral} for the direct fitness effect is quadratic in $s_1, \ldots, s_N$, and does not separate into distinct contributions due to particular individuals.  Therefore, for the general game \eqref{eq:game}, there is no inclusive fitness effect of a single given individual.  (But see Refs.~\cite{taylor2013inclusive,taylor2016hamilton,akccay2016there} for alternative notions of inclusive fitness at the level of pairs or genetic lineages.)

For the donation game \eqref{eq:PD} the direct fitness effect \eqref{eq:directfiteffectPD} is linear in $s_1, \ldots, s_N$ and does separate into distinct contributions due to particular individuals.  Specifically, we can write 
\begin{equation}
\label{eq:modulated}
v_i'(\vs) = \sum_{j \in G} e_{ji} s_j,
\end{equation}
where we have defined the \emph{fitness effect of cooperation at $j$ on individual $i$} as 
\begin{equation}
\label{eq:fiteffect}
e_{ji} = \pi_i \left( -cp_{ij}^{(0)} + b p_{ij}^{(1)} + cp_{ij}^{(2)} - b p_{ij}^{(3)} \right).
\end{equation}
Because the fitness effects $e_{ij}$ are well-defined for game \eqref{eq:PD}, the inclusive fitness effect of cooperation at vertex $i$ exists and can be written as 
\begin{equation}
\label{eq:incfiteffect}
v_i^\mathrm{IF}(\vs) = \sum_{j \in G} e_{ij} s_j.
\end{equation}
In the above expression, the relatedness of a hypothetical cooperator at vertex $i$ to the occupant of vertex $j$ is defined to be $s_j$.  This notion of relatedness  applies to a particular state $\vs$, and might be termed ``identity in state".  It is common to normalize relatedness coefficients so that they lie in the range -1 to 1; for example, one might use $(s_j-\bar{s})/(1-\bar{s})$ instead of $s_j$ to quantify the relatedness of a cooperator at $i$ to the occupant of $j$, where $\bar{s} = \frac{1}{N} \sum_{k \in G} s_k$ is the average population type. However, such normalizations are not needed for models with constant population size \cite{Taylor}.   

We observe that, as a consequence of the reversibility property $\pi_i p_{ij}^{(n)} = \pi_j p_{ij}^{(n)}$, fitness effects are symmetric: $e_{ij} = e_{ji}$.  In other words, the effect that cooperation at $i$ has on vertex $j$ is equal to the effect that cooperation at $j$ has on vertex $i$.  It follows that the direct and inclusive fitness effects of cooperation at $i$ are equal in every state: $v_i^\mathrm{IF}(\vs) = v_i'(\vs)$.  This is an interesting but idiosyncratic property of the model we consider.  We would not, for example, find the result for different interaction and replacement graphs (Section \ref{sec:interactionreplacement}) since there is no analogous reversibility property for $(n,m)-$random walks.  

Since---for the donation game \eqref{eq:PD}---the direct and inclusive fitness effects of cooperation at $i$ are equal in every state, they are also equal for the overall evolutionary process:
\begin{align*}
\E^\circ_\mathrm{MSS} \left[  v_i^\mathrm{IF} | s_i=1  \right ] & =
\E^\circ_\mathrm{MSS} \left[  v_i' | s_i=1  \right ]\\ 
& = \frac{\pi_i}{2} \left( -c\left( q_i^{(0)} - q_i^{(2)} \right) + b \left( q_i^{(1)} - q_i^{(3)}\right) \right)\\
& = \frac{u \pi_i}{2} \left[ -c \tau_i^{(2)} + b\left( \tau_i^{(3)} -\tau_i^{(1)} \right) \right] + \mathcal{O}(u^2)
\qquad \qquad  (u \to 0).
\end{align*}

\section{Computational issues}
\label{sec:computational}

As we pointed out below Eq.~\eqref{eq:mrecur}, computing the coalescence times involves solving a system of $\binom{N}{2}$ linear equations. Simple Gaussian elimination takes $\mathcal{O}(N^6)$ steps for such a system. However this can be improved by a (standard) block-wise inversion approach combined with a state-of-the-art matrix multiplication algorithm. For example, based on variants of the Coppersmith-Winograd algorithm, coalescence times can be computed in $\mathcal{O}(N^{4.75})$ time. 

Further improvement can be achieved by allowing approximate solutions and observing that \eqref{eq:mrecur} is a symmetric diagonally dominant (SDD) system. Such systems can be solved in nearly linear time in the number of non-zero entries of the coefficient matrix. More precisely, for a system described by an $n \times n$ matrix with $m$ non-zero entries, finding a vector $\epsilon$ far in norm from the exact solution can be done in $\mathcal{O}(m \log^2 n \log 1/\epsilon)$ time \cite{KMP2010}.  It follows that for a graph of size $N$ with average degree $\bar{k}$, coalescence times can be determined to within $\epsilon$ in $\mathcal{O}\big( (N\log N)^2\, \bar{k} \log 1/\epsilon \big)$ time.  Furthermore, the algorithm can be efficiently parallelized \cite{KLPSS2015}.

In our experiments we have used a MatLab implementation by Koutis (\url{http://tiny.cc/cmgSolver}). Representative run times on a 2011 MacBook Air, MatLab 2015a were as follows: for $N=1000$, average degree 4, the running time was 12 seconds. For $N=2000$, average degree 4, the running time was 120 seconds. For $N=2000$, average degree 8, the running time was 280 seconds. 

For Figure \ref{fig:million} we computed $(b/c)^*$ for 1.3 million unweighted graphs, generated from 10 different random graph models.  Parameter values were sampled from a uniform distribution on the specified ranges (see below).  Initial graph sizes were uniformly sampled in the range $100 \leq N \leq 150$; if the random graph model produced a disconnected graph, the largest connected component was used.  The critical $(b/c)^*$ ratio was computed by solving Eq.~\eqref{eq:taurecur} for coalescence times and substituting into Eq.~\eqref{eq:altcritbc}.  (No Monte Carlo simulations were used for these investigations.)

Random graph models and parameter ranges are as follows: 100K Erdos-Renyi \cite{ER} with edge 
 probability $0 < p < 1$; 100K small world \cite{newman} with initial connection distance $1\leq d \leq 5$ and edge creation probability $0<p<0.05$; 100K Barabasi-Albert \cite{barabasi} with linking number $1 \leq m \leq 10$; 100K random recursive \cite{RRT} (like Barabasi-Albert except that edges are added uniformly instead of preferentially) with linking number $2 \leq m \leq 8$, 200K Holme-Kim \cite{HK} with linking number $2\leq m \leq 4$ and triad formation parameter $0< P < 0.15$; 200K Klemm-Eguiluz \cite{KE} with linking number $3\leq m \leq 5$ and deactivation parameter $0< \mu <0.15$; 200K shifted-linear preferential attachment \cite{shifted} with linking number $1\leq m \leq 7$ and shift $0< \theta <40$; 100K forest fire \cite{FF} with parameters $0 < p_f < p_b < 0.15$; 100K Island BA; and 100K Island ER.  Island BA is a meta-network of islands \cite{island}, in which each island is a shifted-linear preferential attachment with the same parameters as above.  The number of islands varies from 2 to 5. Considering the islands as meta-nodes, the meta-network among the islands is an ER graph with edge probability $0< p_{\mathrm{inter}}<1$.  Island ER is the as Island BA except that each island is an ER graph with edge probability $0<p_\mathrm{intra}<1$.

We also computed the critical $b/c$ ratio for some large real-world networks using Northeastern's computational cluster. This was computed on a node with Intel Xeon CPU E5-2680 2.8GHz and 256GB RAM.
\begin{itemize}
\item The Framingham Study graph $(N=5253$, average degree $6.5)$. This network has a critical $b/c$ ratio of $7.96$.  The running time was around 3.5 hours. 
\item The ego-facebook network from the Stanford SNAP database ($N=4039$,  average degree $43.7$) took 25 minutes and has critical $b/c$ ratio $48.5$.
\item The (largest connected component of the) ca-GrQc graph from the Stanford SNAP database ($N=4159$, average degree $6.5$ took 23 hours and has critical $b/c$ ratio $6.6$. The reason for this is most likely that this graph is very badly conditioned.
\end{itemize}

\section{Monte Carlo Simulations}

\begin{figure}
        \centering
        \begin{subfigure}[b]{0.45\textwidth}
                \includegraphics[width=.85 \textwidth]{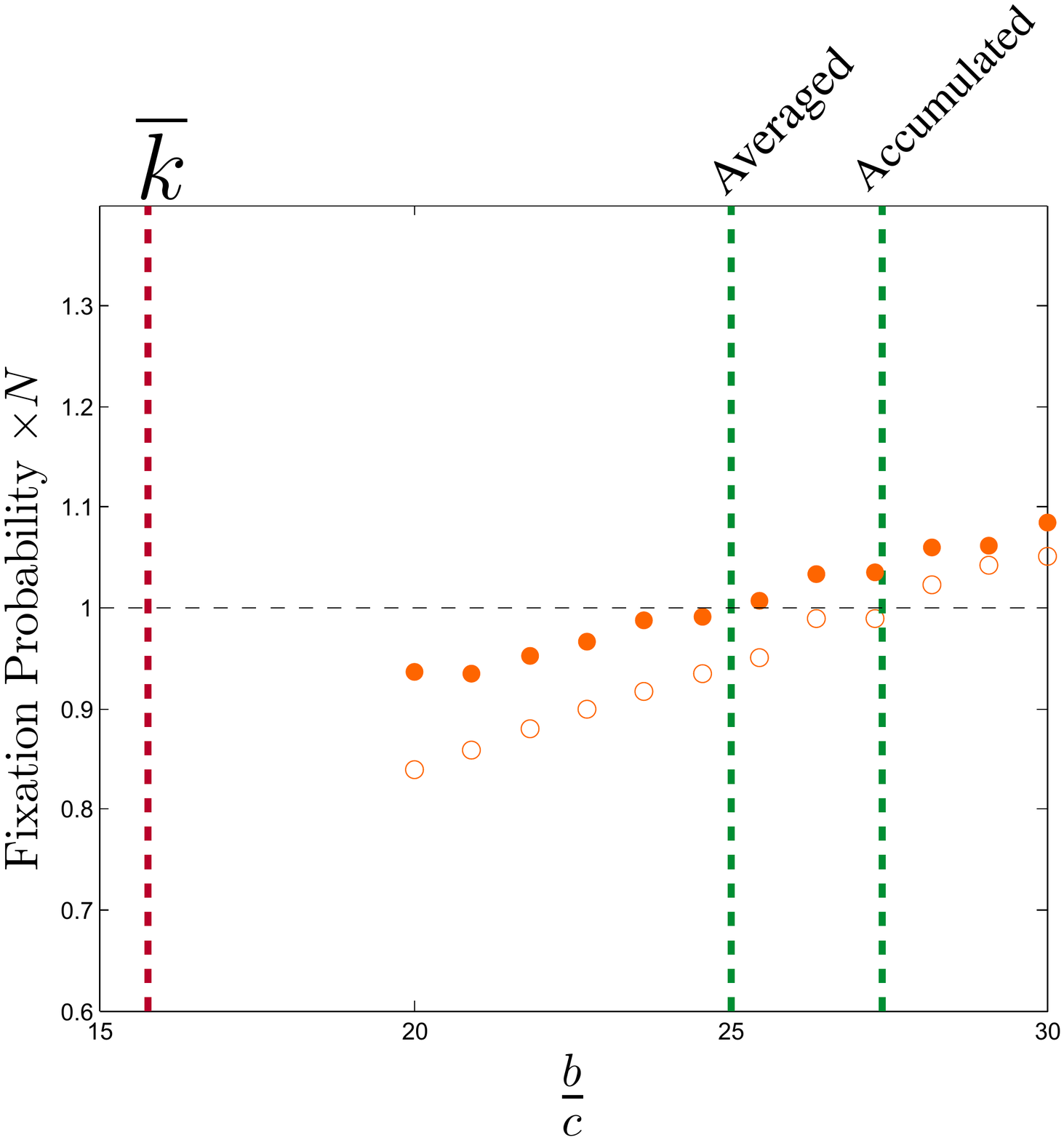}
                \caption{
BA
}
                \label{BA}
        \end{subfigure}%
        ~ ~~ 
        \begin{subfigure}[b]{0.45\textwidth}
                \includegraphics[width=.85  \textwidth]{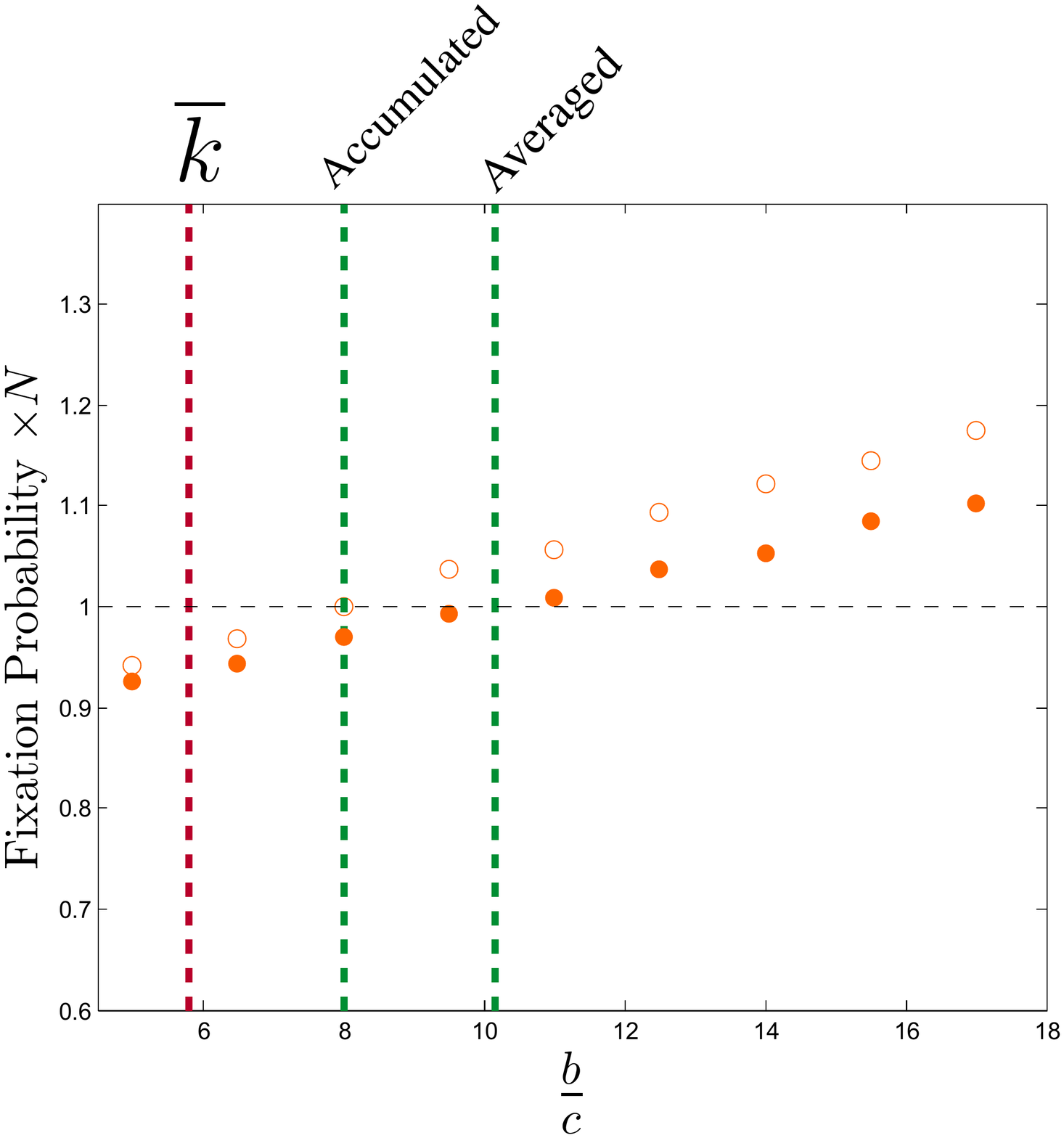}
                \caption{KE}
                \label{KE2}
        \end{subfigure}
\label{b8_k6_L12_d2_N200_T250_pSW0}

 \begin{subfigure}[b]{0.45\textwidth}
                \includegraphics[width=.85 \textwidth]{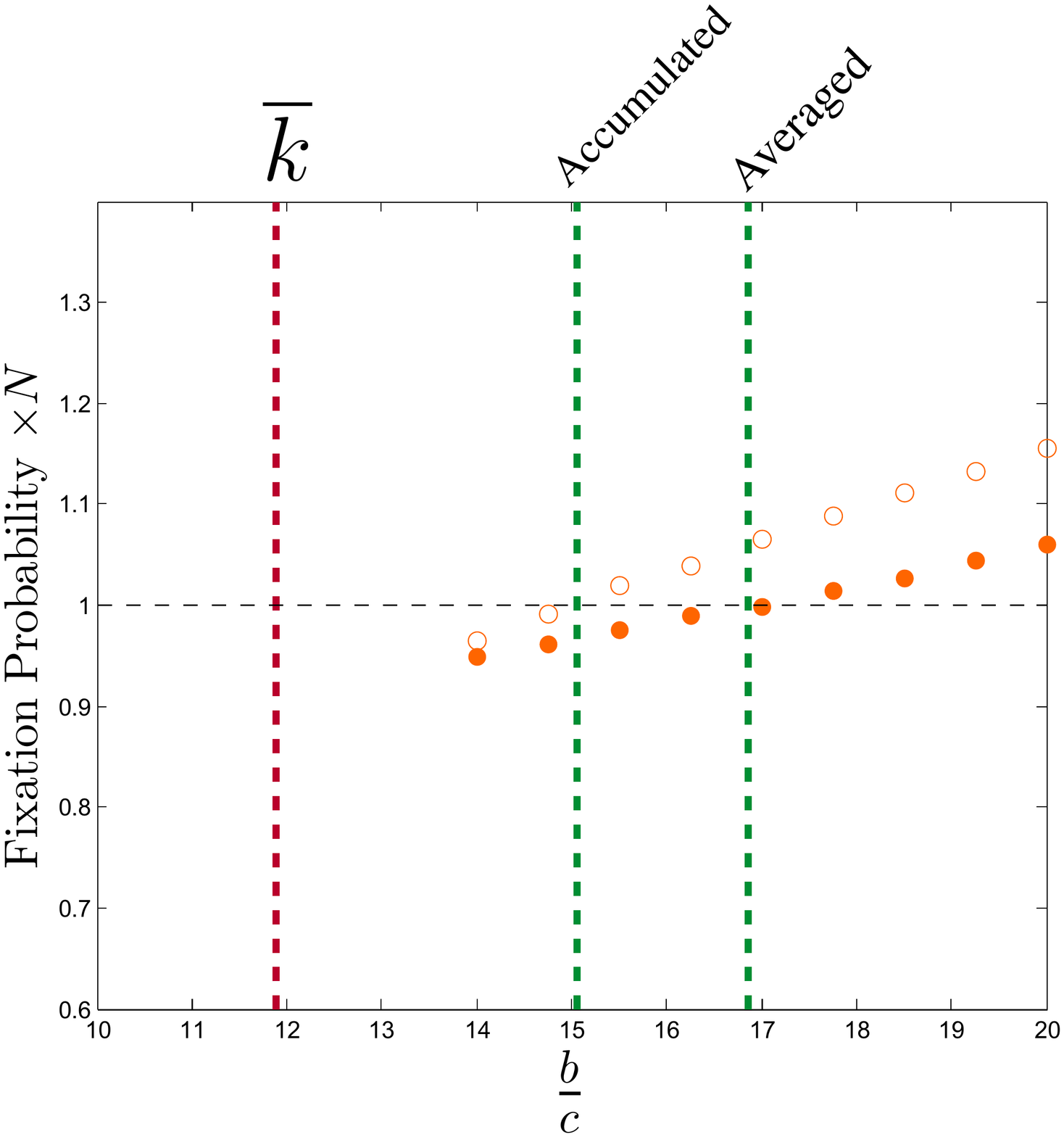}
                \caption{
SW
}
                \label{SW}
        \end{subfigure}%
        ~ ~~ 
        \begin{subfigure}[b]{0.45\textwidth}
                \includegraphics[width=.85  \textwidth]{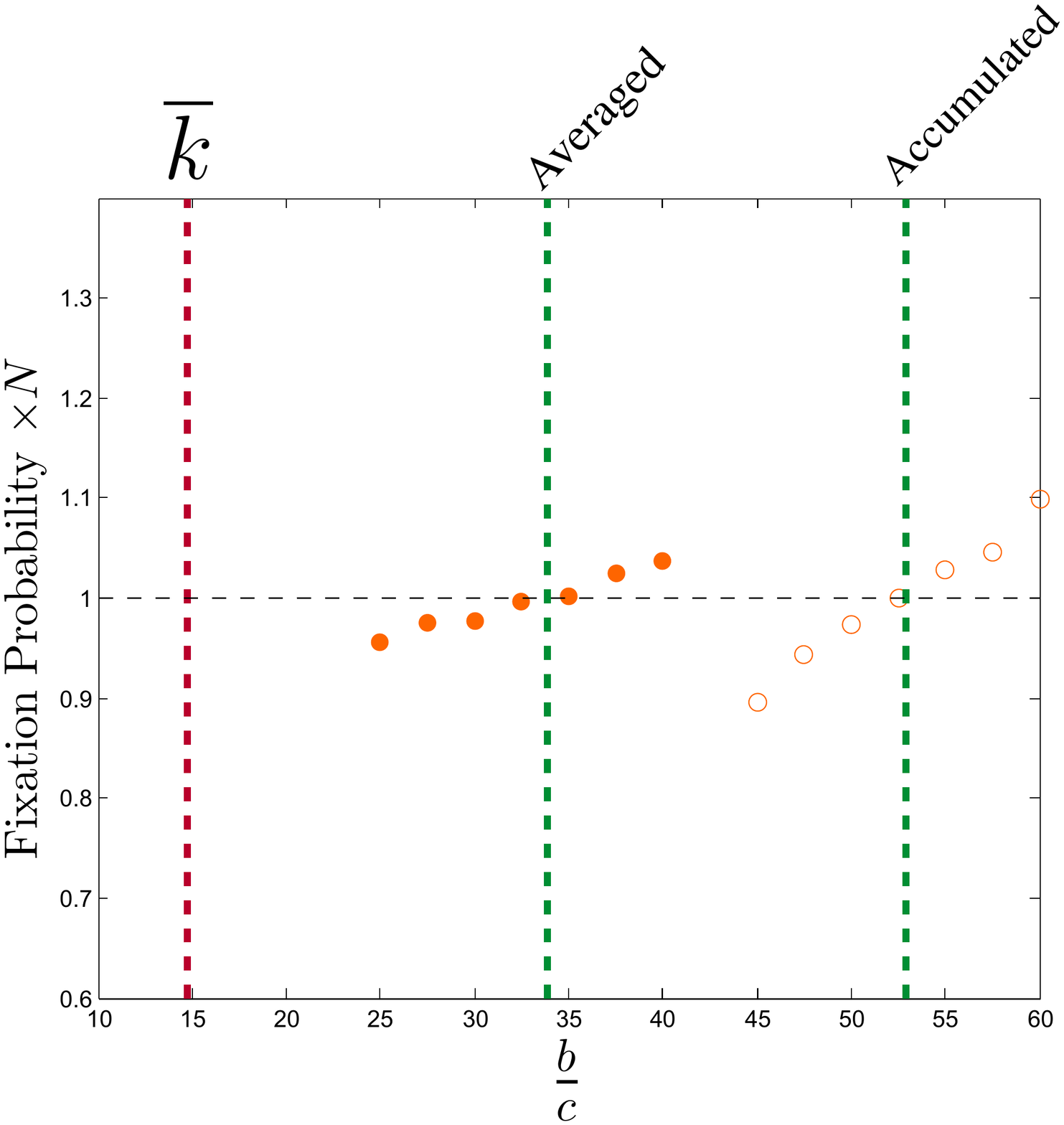}
                \caption{
                FF}
                \label{FF2}
        \end{subfigure}
         \caption{\textbf{Results from additional Monte Carlo simulations.} Simulation are shown for averaged payoffs (solid dots) and accumulated payoffs (open dots).  Vertical lines indicate theoretical $(b/c)^*$ for averaged and accumulated payoffs, as well as the mean degree $\bar{k}$. The horizontal line pertains to neutral drift, for which ${N\rho=1}$. (a) Barabasi-Albert network~\cite{barabasi} with $m=4, N=100$. (b) Klemm-Eguiluz network \cite{KE} with $m=3,\mu=0.4, N=100$.  (c) Small world network \cite{newman} with initial connection distance $d=3$, link creation probability $p=0.1$ and $N=80$.  (d) Forest fire network~\cite{FF} with parameters $p_b=0.32, p_f=0.28, N=80$.}
\label{FIG_ALL_1}
\end{figure}

To verify that our results (which are exact in the limit of weak selection) are accurate for nonweak selection, we performed Monte Carlo simulations. Results are presented in Figures \ref{fig:simulation} and \ref{FIG_ALL_1}.   

The simulation setup is as follows. For each graph, we run $5 \times 10^5$ Monte Carlo trials. For each trial, all nodes are defectors upon the inception, except one randomly-selected node which is a cooperator. The fixation probability is approximated as the fraction of Monte Carlo trials which eventuate in unanimous cooperation before timestep  $T$, which is set to 400000 (which theoretically should be infinite).  The cost of cooperation is $c=1$.  For Fig.~\ref{fig:simulation}, selection strength is $\delta = 0.025.$  For Fig.~\ref{FIG_ALL_1}, selection strength is $\delta = 10^{-2}$ for averaged payoffs and $\delta = 10^{-3}$ for accumulated payoffs.  The smaller $\delta$-value for accumulated payoffs is needed to compensate for the summing of a potentially large number of individual payoffs.

\bibliographystyle{unsrt}
\bibliography{hetgraph}

\end{document}